\newcommand{\SP}[1]{\textcolor{red}{#1}}
\newcommand{\comm}[1]{}
\newcommand{\df}{\textbf}
\let\doendproof\endproof
\renewcommand\endproof{~\hfill\qed\doendproof}
\newcommand{\prob}[1]{\text{\textsc{#1}}}
\newcommand{\NP}[0]{\texttt{NP}}
\newcommand{\MLCM}{\textsc{MLCM}}
\newcommand{\PrMLCM}{\textsc{Proper}-\textsc{MLCM}}
\newcommand{\MLCMP}{\mbox{\textsc{MLCM-P}}}
\newcommand{\PrMLCMP}{\textsc{Proper}-\mbox{\textsc{MLCM-P}}}
\newcommand{\MLCMPA}{\mbox{\textsc{MLCM-PA}}}
\begin{document}
\date{}

\title{Metro-Line Crossing Minimization:\\ Hardness, Approximations, and
Tractable Cases}

\author{Martin Fink \inst1 \and Sergey Pupyrev \inst2}

\institute{Lehrstuhl f\"{u}r Informatik I, Universit\"{a}t
W\"{u}rzburg, Germany. \and Department of Computer Science, University of Arizona, USA.}

\maketitle
\begin{abstract}
  Crossing minimization is one of the central problems in graph
  drawing. Recently, there has been an increased interest in the
  problem of minimizing crossings between paths in drawings of graphs.
  This is the \emph{metro-line crossing minimization} problem
  (MLCM): Given an embedded graph and a set $L$ of simple paths, called
  \emph{lines}, order the lines on each edge so
  that the total number of crossings is minimized.
  So far, the complexity of \MLCM{} has been an open problem. In
  contrast, the problem variant in which line ends must be placed
  in outermost position on their edges (\MLCMP{}) is known to be NP-hard.

  Our main results answer two open questions: (i) We show that \MLCM{} is
  NP-hard. (ii) We give an $O(\sqrt{\log |L|})$-appro\-xi\-ma\-tion
  algorithm for \MLCMP{}.
\end{abstract}

\section{Introduction}
In metro maps and transportation networks, some edges, that is,
railway tracks or road segments, are used by several lines. Usually,
 lines that share an edge are drawn individually
along the edge in distinct colors; see Fig.~\ref{fig:example}.
Often, some lines must cross, and one normally wants to have
as few crossings of metro lines as possible. In the
\emph{metro-line crossing minimization} problem (\MLCM{}), the goal is
to order different metro-lines along each edge of the underlying
network, so that the total number of crossings is minimized.  Although
the problem has been studied~\cite{benkert07}, many questions remain open.

\begin{figure}[h]
  \centering
    \includegraphics[height=4cm]{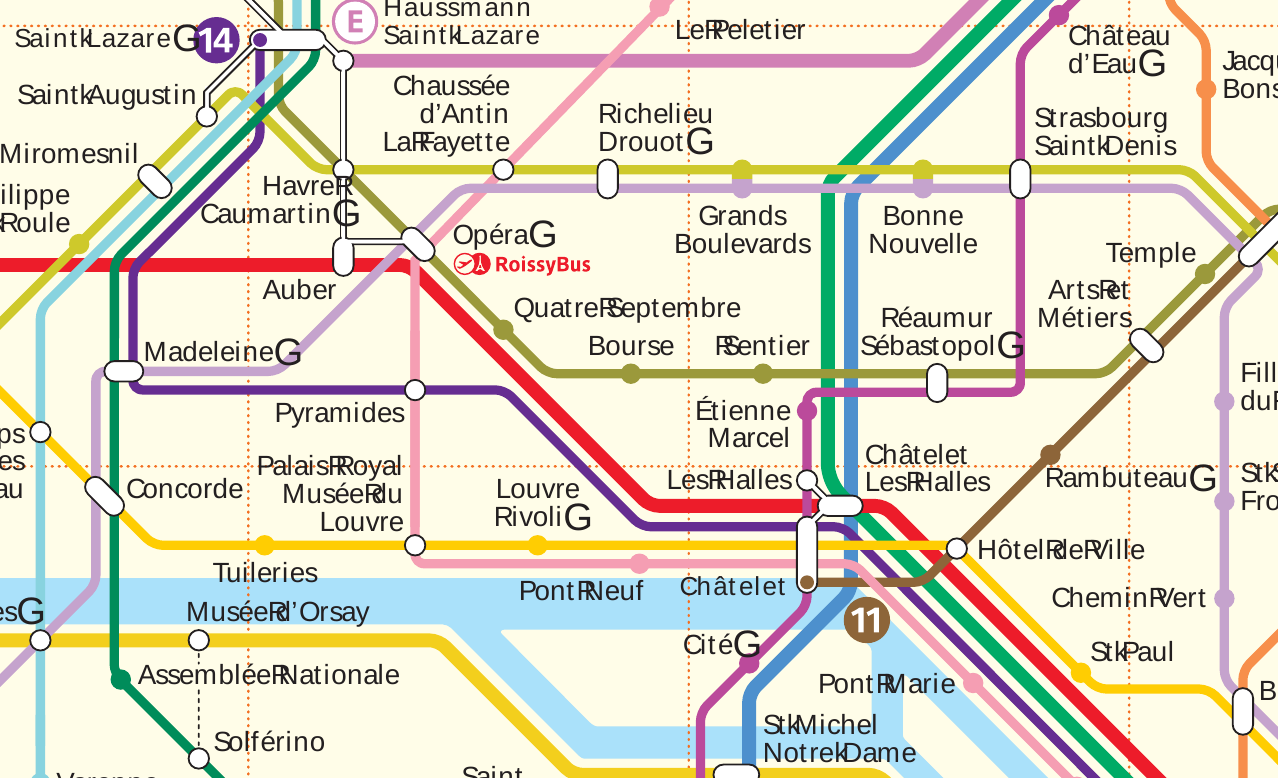}
  \caption{A part of the official metro map of Paris.}
    \label{fig:example}
\end{figure}

Apart from the visualization of metro maps,
the problem has various applications
including the visual representation of biochemical pathways.
In very-large-scale integration (VLSI) design, there is the closely
related problem of minimizing intersections between nets (physical
wires)~\cite{groeneveld89a,mareksadowska95}.
Net patterns with fewer crossings 
have better electrical
characteristics and require less area.  In graph drawing, the number of
edge crossings is one of the most important aesthetic criteria. In
\emph{edge bundling}, groups of edges are drawn close
together---like metro lines---emphasizing the structure of the
graph; minimizing crossings between parallel edges
arises as a subproblem~\cite{pupyrev11}.

\paragraph{Problem Definitions.}
The input is an embedded graph $G = (V, E)$ and
a set $L = \{l_1, \dots, l_{|L|}\}$ of simple paths in $G$.
We call $G$ the \df{underlying network}, the
vertices \df{stations}, and the paths \df{lines}.
The endpoints $v_0, v_k$ of a line $(v_0, \dots, v_k)\in~L$ are
\df{terminals}, and the vertices $v_1, \dots, v_{k-1}$ are
\df{intermediate stations}. For each edge $e = (u,v) \in E$, let $L_{e}$
be the set of lines passing through~$e$.

Following previous work~\cite{argyriou09,nollenburg09}, we use the
\emph{k-side} model;
each station $v$ is represented by a polygon with $k$ sides, where $k$ is
the degree of $v$ in $G$; see Fig.~\ref{fig:pc}. Each side of the polygon
is called a \df{port} of $v$ and corresponds to an incident edge $(v, u) \in E$.
A line $(v_0, \dots, v_k)$ is represented by a polyline starting at a port
of $v_0$ (on the boundary of the polygon), passing through two ports of $v_i$ for
$1 \le i < k$, and ending at a port of $v_k$. For each port of $u \in V$
corresponding to $(u, v) \in E$, we define the \df{line order} $\pi_{uv} = (l_1
\dots l_{|L_{uv}|})$ as an ordered sequence of the lines in $L_{uv}$, which
specifies the clockwise order at which the lines $L_{uv}$ are connected to the
port of $u$ with respect to the center of the polygon.
Note that there are two different line orders $\pi_{uv}$ and
$\pi_{vu}$ on any edge $(u,v)$ of the network. A \df{solution}, or a \df{line layout}, specifies line orders $\pi_{uv}$ and $\pi_{vu}$
for each edge $(u, v) \in E$.

\begin{figure}[t]
    \centering
		\hfill
    \includegraphics[page=3]{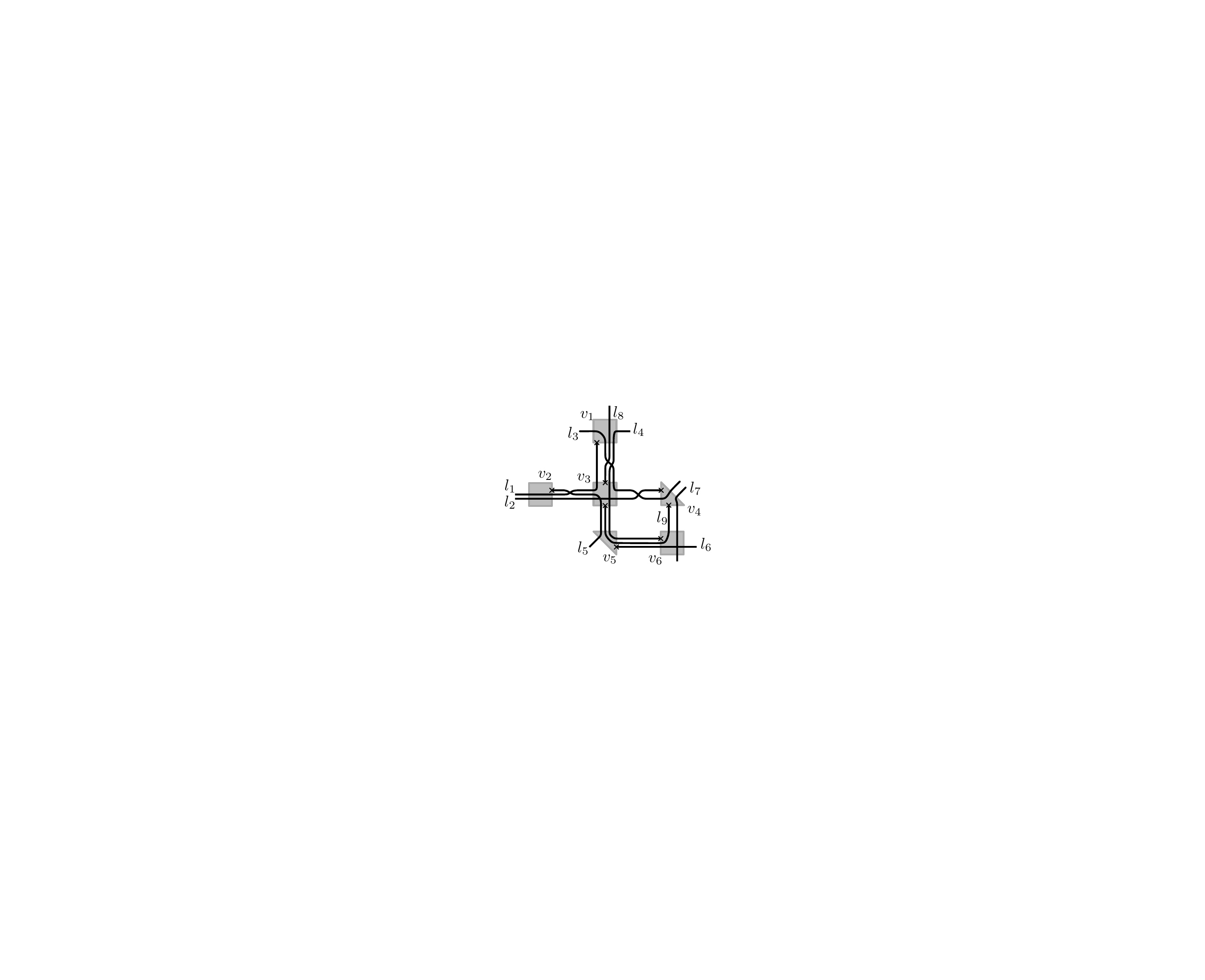}
		\hfill
    \includegraphics[page=4]{pics/ksidemodel}
		\hfill
    \caption{9 lines on an underlying network of 6 vertices and 9
      edges. (a) $\pi_{v_3v_4}=(l_3,l_2)$ and $\pi_{v_3v_1}=(l_1,
      l_8,l_4, l_3)$.  The lines $l_3$ and $l_4$ have an unavoidable
      edge crossing on $\left\{ v_1, v_3 \right\}$. In contrast, the
      crossing of $l_2$ and $l_3$ on $\{v_3,v_4\}$ is avoidable. In
      $v_3$ there is an unavoidable vertex crossing of the lines $l_2$
      and $l_8$.  As the vertex crossing of $l_2$ and $l_5$ in $v_3$
      is avoidable the solution is not feasible.  (b)~A feasible
      solution satisfying the periphery condition.}
    \label{fig:pc}
\end{figure}

A line crossing is a crossing between polylines corresponding to a pair of lines.
We distinguish two types of crossings; see Fig.~\ref{fig:pc}(a).
An \df{edge crossing} between lines $l_1$ and $l_2$ occurs whenever
$\pi_{uv} = (\dots l_1 \dots l_2 \dots)$ and $\pi_{vu} = (\dots l_1 \dots l_2 \dots)$
for some edge $(u, v) \in E$. We now
consider the concatenated cyclic sequence $\pi_u$ of the orders $\pi_{uv_1},
\dots, \pi_{uv_k}$, where $(u, v_1), \dots, (u, v_k)$ are the edges incident to
$u$ in clockwise order. A \df{vertex crossing} between $l_1$
and $l_2$ occurs in $u$ if $\pi_u = (\dots l_1 \dots l_2 \dots l_1 \dots l_2
\dots)$. 
Intuitively, the lines change their
relative order inside $u$. A crossing is called \df{unavoidable} if
the lines cross in any line layout;
otherwise it is \df{avoidable}.
A crossing is unavoidable if neither $l_1$ nor $l_2$ have a terminal on
their common subpath and the lines split on both ends of this subpath
in such a way that their relative order has to change; see Fig.~\ref{fig:pc}.
Following previous work, we insist that (i) \emph{avoidable
vertex crossings are not allowed} in a solution, that is, these crossings are not hidden
below a station symbol, and (ii) \emph{unavoidable vertex crossings
are not counted} since they occur in any solution.

A pair of lines may share several common
subpaths, and the lines may cross multiple times on the subpaths.
For simplicity of presentation, we assume that there is at most one common
subpath of two lines. Our results do, however, also hold for the general case
as every common subpath can be considered individually.

\paragraph{Problem variants.}
Several variants of the problem have been considered in the literature.
The original metro-line crossing minimization problem is formulated as follows.

\begin{problem}[\df{\MLCM{}}]
For a given instance $(G, L)$, find a line layout with the minimum
number of crossings.
\end{problem}

In practice, it is desirable to avoid gaps between
adjacent lines; to this end, every line is drawn so that
it starts and terminates at the topmost or bottommost end of a port; see
Fig.~\ref{fig:pc}(b).
In fact, many manually created maps follow this \df{periphery
condition} introduced by Bekos~et~al.~\cite{bekos08}.
Formally, we say that a line order $\pi_{uv}$ at the port of $u$
satisfies the periphery condition if $\pi_{uv} = (l_1 \dots l_p \dots l_q \dots l_{|L_{uv}|})$,
where $u$ is a terminal for the lines $l_1, \dots, l_p, l_q, \dots, l_{|L_{uv}|}$ and
$u$ is an intermediate station for the lines $l_{p+1}, \dots, l_{q-1}$.
The problem is known as \emph{\MLCM{} with periphery condition}.

\begin{problem}[\df{\MLCMP{}}]
For a given instance $(G, L)$, find a line layout, subject to the periphery
condition on any port, with the minimum number of crossings.
\end{problem}

In the special case of \MLCMP{} with \emph{side assignment}
(\df{\MLCMPA{}}), the input additionally
specifies for each line end on which side of its port it terminates;
N{\"o}llenburg~\cite{nollenburg09} showed that \MLCMPA{} is computationally equivalent to the version
of \MLCM{} in which all lines terminate at vertices of degree one.


As \MLCM{} and \MLCMP{} are \NP{}-hard even for very simple
networks, we introduce the additional
constraint that no line is a subpath of another line.
Indeed, this is often the case for bus and metro transportation
networks; if, however, there is a line that is a subpath of a longer
line then one can also visualize it as a part of the longer line.
We call the problems with this new restriction \PrMLCM{} and \PrMLCMP{}.

\paragraph{Previous Work.}
Metro-line crossing minimization was initiated by
Benkert~et~al.~\cite{benkert07}.
They described a quadratic-time algorithm
for \MLCM{} when the underlying network consists of a single edge with
attached leaves. The complexity status of \MLCM{} has been open.
As far as we are aware, this is the only known result on \MLCM{} so
far.

Bekos~et~al.~\cite{bekos08} studied \MLCMP{} and proved
that the variant is \NP{}-hard on paths. Motivated by the
hardness, they introduced the variant \MLCMPA{} and studied the problem on simple
networks.
Later, polynomial-time algorithms for \MLCMPA{} were found with gradually
improving running time by Asquith~et~al.~\cite{asquith08},
Argyriou~et~al.~\cite{argyriou09}, and N{\"o}llenburg~\cite{nollenburg09}, until
Pupyrev~et~al.~\cite{pupyrev11} presented  a linear-time algorithm.
Asquith et~al.~\cite{asquith08} formulated \MLCMP{} as an integer linear program that
finds an optimal solution for the problem on general graphs. Note that
in the worst case this approach requires exponential time.
Recently, Fink and Pupyrev studied a variant of \MLCM{} in which whole
blocks of lines may cross~\cite{fink+pupyrev13}.

In the circuit
design community (VLSI), Groeneveld~\cite{groeneveld89a} considered the problem of
adjusting the routing so as to minimize crossings between the pairs of nets,
which is equivalent to \MLCMPA{}, and suggested an algorithm for general graphs.
Another method for graphs of maximum degree four was given in~\cite{chen98}.
Marek-Sadowska~et~al.~\cite{mareksadowska95} considered a related problem
of distributing the line crossings among edges of the underlying graph in order
to simplify the net routing.

\begin{table}[t]
    \centering
    \begin{tabular}{|l|l|l|l|}
        \hline
        problem &         graph class &   result   &  reference \\
        \hline
        \hline
				\MLCM{} & caterpillar & \NP{}-hard & Thm.~\ref{thm:mlcm-h-hardness}\\
        \MLCM{} & single edge & $O(|L|^2)$-time algorithm & \cite{benkert07} \\
				\MLCM{} & general graph & crossing-free test & Thm.~\ref{thm:mlcm-planar}\\
        \hline
        \MLCMP{} & path & \NP{}-hard & \cite{argyriou09}\\
        \MLCMP{} & general graph & ILP & \cite{asquith08}\\
        \MLCMP{} & general graph & $O(\sqrt{\log |L|})$-approximation &
				Thm.~\ref{thm:approx}\\
        \MLCMP{} & general graph & crossing-free test & Thm.~\ref{thm:2sat}\\
        \hline
		\PrMLCMP{} & general graph with consistent lines & $O(|L|^3)$-time
        algorithm&
				Thm.~\ref{thm:consist}\\
        \hline
        \MLCMPA{} & general graph & $O(|V| + |E| + |V||L|)$-time & \cite{pupyrev11}\\
        \MLCMPA{} & general graph & crossing-free test & \cite{nollenburg09} \\
        \hline
    \end{tabular}
    \smallskip
    \caption{Overview of results for the metro-line crossing
    minimization problem.}
    \label{table:res}
\end{table}

\paragraph{Our Results.}
\label{sec:contributions}
Table~\ref{table:res} summarizes our contributions and previous results.
We first prove that the unconstrained variant \MLCM{}
is \NP{}-hard even on caterpillars (paths with attached leaves), thus, answering
an open question of Benkert et~al.~\cite{benkert07} and N{\"o}llen\-burg~\cite{nollenburgThesis}.
As crossing minimization is hard, it is natural to ask whether there exists a
\df{crossing-free} solution. We show that there is a crossing-free solution if and
only if there is no pair of lines forming an unavoidable crossing.

We then study \MLCMP{}. Argyriou~et~al.~\cite{argyriou09} and N{\"o}llenburg~\cite{nollenburgThesis}
asked for an approximation algorithm. To this end, we develop a 2SAT model for
the problem. Using the model we get an $O(\sqrt{\log |L|})$-approximation
algorithm for \MLCMP{}.
This is the first approximation algorithm in the context
of metro-line crossing minimization.
We also show how to find a crossing-free solution (if it exists) in polynomial time.
Moreover, we prove that \MLCMP{} is fixed-parameter tractable with respect to
the maximum number $k$ of allowed crossings by using the
fixed-parameter tractability of 2SAT.

We then study the new variant \PrMLCMP{} and show how to solve it on caterpillars,
\emph{left-to-right trees} (considered in~\cite{bekos08,argyriou09}),
and other instances described in Section~\ref{sec:p-mlcm-p}.
An optimal solution can be found by applying a maximum flow algorithm
on a certain graph.
This is the first polynomial-time exact
algorithm for the variant in which avoidable crossings may be presented in an
optimal solution.

Finally, we consider practical aspects of the proposed algorithms. We show how
to find a (not necessarily optimal) line layout in the cases where some of the
required constraints are not fulfilled.


\section{The \MLCM{} Problem}
We begin with the most flexible problem variant \MLCM{}, and show that
it is hard to decide whether there is a solution
with at most $k>0$ crossings, even if the underlying network is a
caterpillar.
In contrast, we give a polynomial-time algorithm for deciding whether there exists
a crossing-free solution.

\subsection{NP-Hardness}
\label{sec:mlcm-hardness}

\begin{theorem}
\label{thm:mlcm-h-hardness}
  \MLCM{} is \NP{}-hard on caterpillars.
\end{theorem}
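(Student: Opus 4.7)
The plan is to prove \NP{}-hardness by a polynomial reduction from a classical \NP{}-hard problem whose structure matches the ordering choices made in a line layout, a natural candidate being \emph{MAX-2SAT} (or \emph{feedback arc set on tournaments}). In either problem the instance encodes a collection of binary decisions together with a count of ``constraint violations,'' mirroring the binary choice of which of two lines lies above the other on a shared edge.

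Concretely, I would build a caterpillar whose spine is long enough to dedicate one segment per variable of the source formula. On each variable segment I place a \emph{variable gadget}: a small family of lines whose two possible relative orderings along the spine correspond to the two truth values of the variable. Leaves attached to spine vertices play two roles. First, they serve as terminal vertices at which a line can end with complete freedom in the cyclic order around its port; this is precisely the extra flexibility a caterpillar has over the single-edge-with-leaves case already handled in~\cite{benkert07}. Second, they anchor \emph{clause-check lines}: for each clause I attach a constant-size family of short lines whose optimal ordering against the variable-gadget lines contains an additional avoidable crossing exactly when no literal of the clause is satisfied. Summing over gadgets, the total number of avoidable crossings in any layout will equal a fixed constant $C_{0}$ plus the number of unsatisfied clauses under the induced truth assignment, so MAX-2SAT with threshold $k$ translates into \MLCM{} with threshold $C_{0}+m-k$.

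The main obstacle, I expect, is the bookkeeping of \emph{avoidable} versus \emph{unavoidable} crossings. Recall that a crossing of two lines is unavoidable only when neither line has a terminal on their common subpath \emph{and} the lines split at both ends of that subpath in the conflicting way; such crossings are free of charge under the \MLCM{} objective. I therefore have to engineer the lines so that all purely ``structural'' crossings needed to anchor the gadgets are unavoidable, while every crossing that encodes clause satisfaction is genuinely avoidable. Guaranteeing this invariant gadget-by-gadget, and in particular verifying that clause gadgets sharing a stretch of the spine do not create unintended avoidable crossings between otherwise unrelated lines, is the delicate part of the construction; once the invariant is in place the reduction itself is clearly polynomial and \NP{}-hardness on caterpillars follows.
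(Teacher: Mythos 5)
Your proposal is a strategy outline rather than a proof: every technical ingredient that would actually establish hardness is deferred. You never exhibit a variable gadget, never specify the clause-check lines, and you explicitly postpone the one issue you correctly identify as delicate (making structural crossings unavoidable while keeping clause-encoding crossings avoidable). Beyond that, there is a specific difficulty your sketch does not address at all: on a caterpillar a variable's truth value must be \emph{transmitted} from its dedicated spine segment to every clause segment that mentions it, which means long lines running past many unrelated gadgets. In \MLCM{} the relative order of two such lines is a free binary choice on their common subpath unless something forces it, so you would additionally need consistency/copy gadgets (and an amplification argument bounding all unintended inter-gadget crossings by a fixed constant $C_0$ independent of the assignment). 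Without these pieces the claimed identity ``total avoidable crossings $= C_0 + \#\text{unsatisfied clauses}$'' is an aspiration, not a lemma.

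The paper avoids all of this by reducing not from MAX-2SAT but from \MLCMP{} on paths, which is already known to be \NP{}-hard. Given a path instance, it attaches to each spine edge four leaves carrying $2|L|^2$ ``red'' lines that cross the edge diagonally. Any old line that violates the periphery condition at a vertex must cross an entire red block, costing $|L|^2$ extra crossings --- more than any periphery-respecting solution pays in total --- so every optimal \MLCM{} layout on the augmented caterpillar automatically satisfies the periphery condition, and the crossing counts differ by an explicitly computable constant $K$. This is a far lighter construction: one gadget type, one counting argument, no variable/clause/consistency machinery. If you want to salvage your route, you would essentially be re-proving the hardness of \MLCMP{} on paths from scratch before even getting to the caterpillar; reducing from that known result is the efficient move.
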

\begin{proof}
  We prove hardness by reduction from \MLCMP{} which is known to be
  \NP{}-hard on paths~\cite{argyriou09}. Suppose we have an instance
  of \MLCMP{} consisting of a path $G = (V,E)$ and lines $L$ on the path. We want
  to decide whether it is possible to order the lines with periphery
  condition and at most $k$ crossings.

	We create a new underlying network $G' = (V',E')$ by adding some
	vertices and edges to $G$.
	We assume that $G$ is embedded along a horizontal line and specify new positions
	relative to this line.
  For each edge $e = (u,v) \in E$, we add vertices $u_1, u_2,
  v_1$, and $v_2$ and edges $(u,u_1)$, $(u,u_2)$, $(v,v_1)$, and
  $(v,v_2)$ such that $v_1$ and $u_1$ are above the path and
  $v_2$ and $u_2$ are below the path. Next, we add $\ell = |L|^2$ lines from
  $u_1$ to $v_2$, and $\ell$ lines from $u_2$ to $v_1$ to $L' \supseteq
	L$; see Fig.~\ref{fig:red_cross_insertion}. We call the added structure the
	\df{red cross} of $e$, the added lines \df{red lines}, and the lines of $L$ old
	lines. We claim that there is a
  number $K$ such that there is a solution of \MLCMP{} on $(G,L)$ with
  at most $k$ crossings if and only if there is solution of
  \MLCM{} on $(G',L')$ with at most $k+K$ crossings.

  \begin{figure}[t]
    \centering
    \includegraphics[page=1]{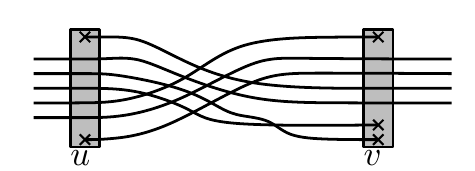}
    \hfill
    \includegraphics[page=2]{pics/red_cross}
    \caption{(a) \MLCMP{}-solution on edge
    $(u,v)$. (b) Insertion of a red cross into the solution with minimum
    number of additional crossings.}
    \label{fig:red_cross_insertion}
  \end{figure}

  Let $e = (u,v) \in E$ be an edge of the path, and let $l\in L_e$ be a line on
  $e$. If $l$ has its terminals on $u$ and $v$, that is,
  completely lies on $e$, it never has to cross in $G$ or $G'$; hence,
  we assume such lines do not exist. Assume $l$ has none of
  its terminals on $u$ or $v$. It is easy to see that it has to
  cross all $2\ell$ lines of the red cross of $e$. Finally, suppose $l$ has just one terminal
  at a  vertex of $e$, say on $u$. If the terminal is above the
  edge $(u, u_1)$ then it has to cross all red lines from $u_2$ to
  $v_1$ but can avoid the red lines from $u_1$ to $v_2$, that is,
  $\ell$ crossings with red lines are necessary.
  Symmetrically, if the terminal is below $(u,u_2)$ then
  only the $\ell$ crossings with the red lines from $u_1$ to
  $v_{2}$ are necessary. If the terminal is between the edges
  $(u,u_1)$ and $(u,u_2)$ then all $2\ell$ red edges must be crossed.
  There are, of course, always $\ell^2$ unavoidable internal crossings of the
	red cross of $e$.

  Let $\ell_e = \ell_e^t + \ell_e^m$ be the number of lines on
  $e$, where $\ell_e^t$ and $\ell_e^m$ are the numbers of lines on
  $e$ that do or do not have a terminal at $u$ or $v$, respectively.
  In any solution there are at least $\ell_e^t
  \cdot \ell + 2 \cdot \ell_e^m \cdot \ell + \ell^2$ crossings on
  $e$ in which at least one red line is involved. It is easy to see
  that placing a terminal between red lines leaving towards a leaf
  never brings an advantage. On the other hand, if just a single line
  has an avoidable crossing with a block of red lines, the number of
  crossings increases by $\ell = |L|^2$, which is more than the
  number of crossings in any solution for $(G,L)$ without double crossings. Hence, any
  optimal solution of the lines in $G'$ has no avoidable
  crossings with red blocks and, therefore, satisfies the periphery
  condition; thus, after deleting the added edges and red lines, we
  get a feasible solution for \MLCMP{} on $G$.

  Let $K:= |E|\cdot \ell^2 + \sum_{e \in E}\left( \ell_e^t
  + 2 \ell_e^m\right) \cdot \ell $ be the minimum number of crossings
  with red lines involved on $G'$. Suppose we have an \MLCM{}-solution on
  $G'$ with at most $K+k$ crossings. Then, after deleting the red
  lines, we get a feasible solution for \MLCMP{} on $G$ with at most
  $k$ crossings. On the other hand, if we have an \MLCMP{}-solution on
  $G$ with $k$ crossings, then we can insert the red lines with just
  $K$ new crossings: Suppose we want to insert the block of red lines
  from $u_1$ to $v_2$ on an edge $e = (u,v) \in E$. We start by
  putting them immediately below the lines with a terminal on the top
  of $u$. Then we cross all lines below until we see the first line
  that ends on the bottom of $v$ and, hence, must not be crossed by
  this red block. We go to the right and just keep always directly
  above the block of lines that end at the bottom side of $v$; see
  Fig.~\ref{fig:red_cross_insertion}.
  Finally, we reach $v$ and have not created any avoidable crossing.
  Once we have inserted all blocks of red lines, we get a solution for
  the lines on $G$ with exactly $K+k$ crossings.
\end{proof}

\subsection{Recognition of Crossing-Free Instances}
\label{sec:mlcm-planar}
Given an instance of \MLCM{}, we want to check whether there exists a
solution without any crossings. If there exists such a crossing-free
solution then there cannot be a pair of lines with an unavoidable
crossing. We show that this condition is already sufficient.

Consider a pair of lines $l_1,l_2$ with a common subpath $P=(v, v_1,
\ldots,  u_1, u)$; see
Fig.~\ref{fig:unavoidable}. Suppose the lines \df{split} at $v$, that is, neither $l_1$ nor $l_2$ terminates at $v$.
Since vertex crossings are not allowed in our model, there is a unique order
between $l_1$ and $l_2$ at the port $vv_1$ in any solution of \MLCM{}.
Furthermore, in any crossing-free solution, the relative order of $l_1$ and
$l_2$ is the same on all ports.

We arbitrarily fix a direction for each edge of the underlying
network. For an edge $e=(u, v)\in E$ directed from $u$ to $v$ and for a pair of lines $l_1, l_2 \in L_{uv}$,
we say that $l_1$ is \df{above} $l_2$ if $\pi_{uv} = (\dots l_1 \dots l_2 \dots)$ in any crossing-free solution.
Otherwise, if $\pi_{uv} = (\dots l_2 \dots l_1 \dots)$ in any crossing-free solution, we say that $l_1$ is
\df{below} $l_2$. Note that on some other edge $e'$, $l_1$ may be below $l_2$,
depending on the direction of $e'$.  We say that a line $l$ \textbf{lies} between
$l_1$ and $l_2$ if $l_1$ is above $l$ and $l$ is above $l_2$ on $e$.
First, a useful observation.

\textbf{Observation.} The lines $l_1, l_2$ have an unavoidable crossing if and only if
they split in such a way that, on some edge $e$, $l_1$ has to be above $l_2$
and at the same time $l_2$ has to be above $l_1$.

\begin{figure}[h]
      \centering
      \includegraphics[page=1]{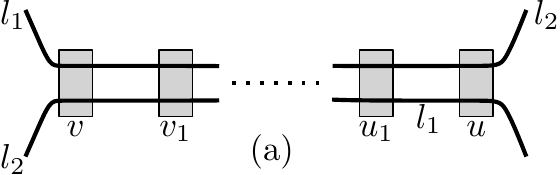}
			\hfill
      \includegraphics[page=2]{pics/ex3}
  \caption{(a)~$l_1$ is above $l_2$ at the port $vv_1$ but below
    $l_2$ at the port $uu_1$; the crossing is unavoidable.
  (b)~A terminal on a common subpath at $u$; the crossing is avoidable.}
  \label{fig:unavoidable}
\end{figure}

We assume that no line is a subpath of another line as
a subpath can be reinserted parallel to the longer line in a crossing-free solution.
Consider a pair of lines $l_1, l_2$ whose common subpath $P$ starts
in $u$ and ends in $v$. If $u$ (similarly, $v$)
is a terminal neither for $l_1$ nor $l_2$ then there is a unique relative order
of the lines along $P$ in any crossing-free solution;
see Fig.~\ref{fig:unavoidable}.
Hence, we assume $u$ is a terminal for $l_1$, $v$ is a terminal for $l_2$, and
we call such a pair \df{overlapping}.
Suppose there is a \df{separator} for $l_1$ and $l_2$, that is,
a line $l$ on the subpath of $l_1$ and $l_2$ that has to be below
$l_1$ and above $l_2$ (or the other way round) as shown in
Fig.~\ref{fig:separating-line}. Then, $l_1$ has to
be above $l_2$ in a crossing-free solution. The only remaining case is a pair of lines without a separator.

\begin{figure}[t]
	\begin{minipage}[t]{0.39\textwidth}
		\centering
		\includegraphics{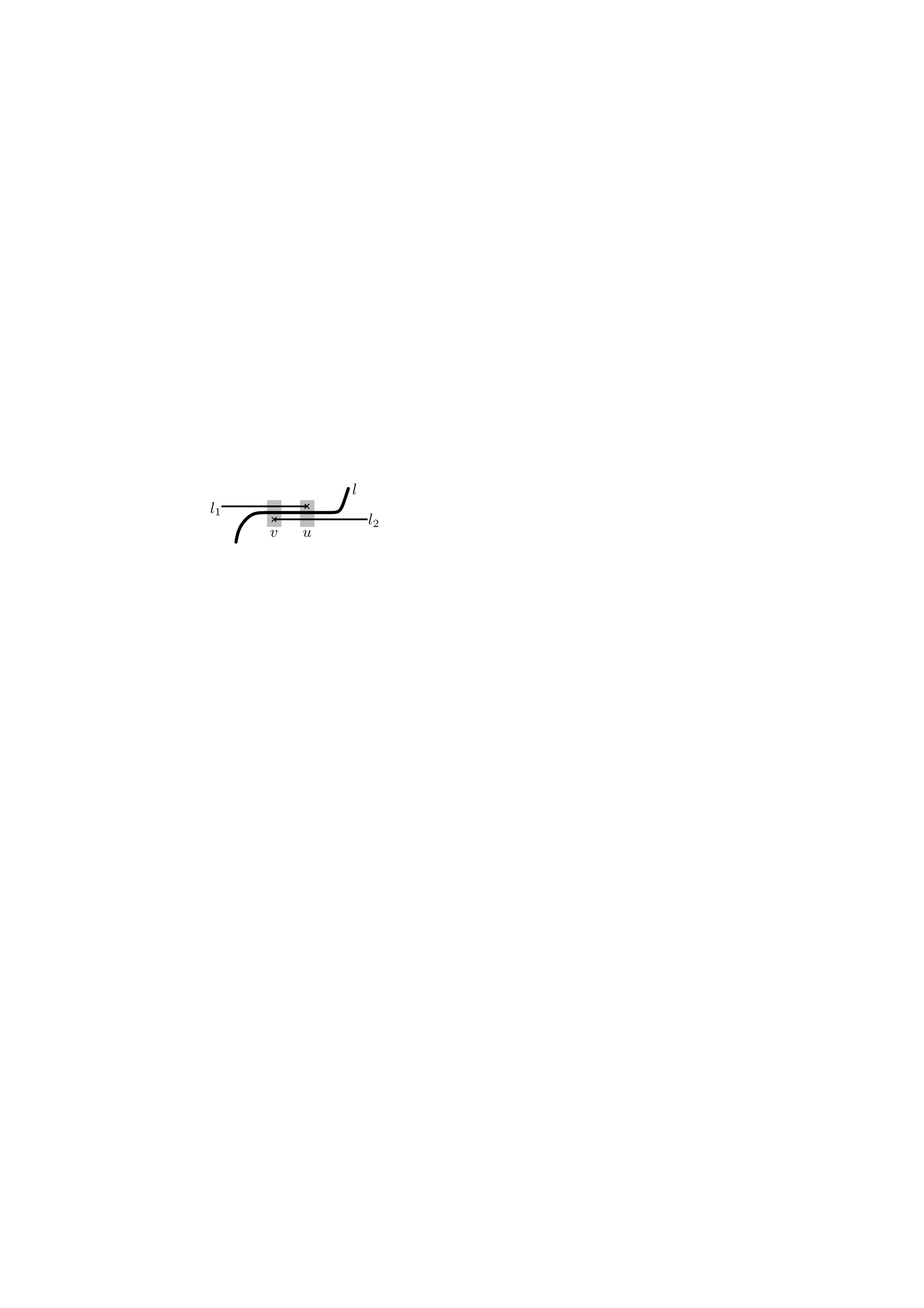}
		\caption{A separator $l$ of lines $l_1$ and $l_2$.}
		\label{fig:separating-line}
	\end{minipage}
	\hfill
	\begin{minipage}[t]{0.59\textwidth}
		\centering
		\includegraphics{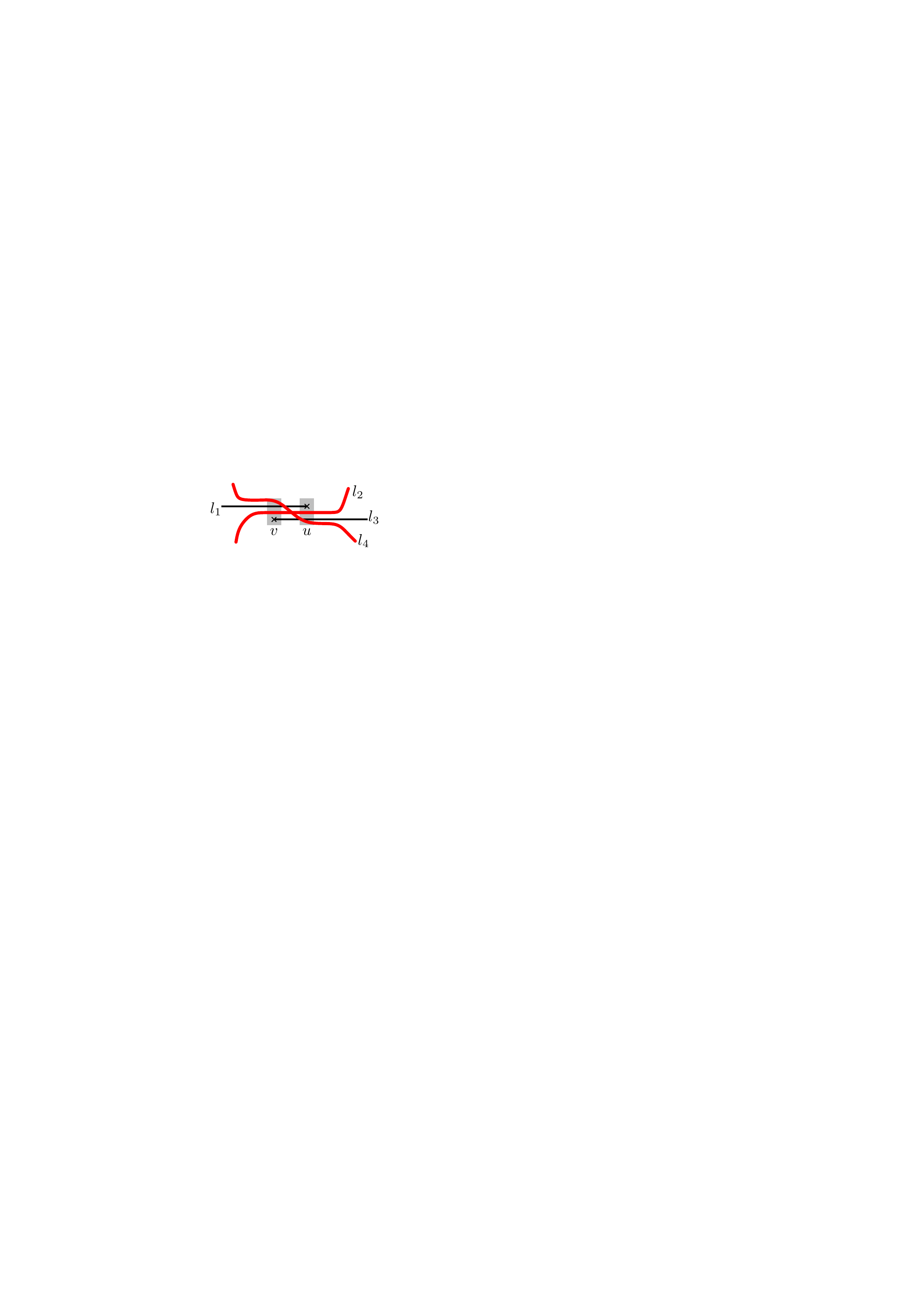}
		\caption{Unavoidable crossing of 2 separators of $l_1$ and
		$l_3$.
		}
		\label{fig:mlcm-no-cycle-length-4}
	\end{minipage}
\end{figure}

\newcounter{mergelemmacounter}
\setcounter{mergelemmacounter}{\value{lemma}}
\newcommand{\mergelemma}{
  Let $l_1, l_2$ be a pair of overlapping lines without a separator for
  which the number of edges of the common subpath is minimum.
  If there exists a crossing-free solution then 
  there is also a crossing-free solution in which $l_1$ and $l_2$ are immediate neighbors
  in the orders on their common subpath.
}

\begin{lemma}
  \mergelemma
  \label{lemma:merge-no-separator}
\end{lemma}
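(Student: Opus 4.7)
The plan is to start from an arbitrary crossing-free solution $S$ and iteratively remove lines from between $l_1$ and $l_2$ along their common subpath $P$, without ever creating a crossing. Since $l_1$ and $l_2$ do not cross in $S$, their relative order on $P$ is consistent; without loss of generality $l_1$ lies above $l_2$ throughout $P$. Let $M$ be the set of lines that lie between $l_1$ and $l_2$ on at least one edge of $P$ in $S$. We proceed by induction on $|M|$: it suffices to prove that for any $l \in M$ one can reroute $l$ to lie entirely outside the strip between $l_1$ and $l_2$ on the maximal subpath $Q_l \subseteq P$ where it currently lives between them, without introducing any crossings. Note that by crossing-freeness, $l$ does in fact lie between $l_1$ and $l_2$ on \emph{all} of $Q_l$, not only on isolated edges.

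Fix $l \in M$ and let $a, b$ be the endpoints of $Q_l$ with $a$ closer to $u$ and $b$ closer to $v$. I would analyze two symmetric questions: can $l$ be pushed above $l_1$ on $Q_l$, and can $l$ be pushed below $l_2$? At $a$, two situations may occur. If $a = u$, then $l_1$ terminates at $a$, so locally there is nothing forcing $l$ to stay below $l_1$: the port at $u$ has $l_1$'s terminal on one side, and we may place $l$ on the opposite side. If instead $a$ is an internal vertex of $P$, then $l$ leaves $P$ at $a$ (either by terminating or by turning onto an edge not in $P$), and so the common subpath of $l$ and $l_1$ is a strict subpath of $P$ with strictly fewer edges. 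Since neither $l$ nor $l_1$ is a subpath of the other, $(l, l_1)$ is either a non-overlapping pair (for which the relative order is already uniquely forced, and that forced order must be $l$ below $l_1$, since otherwise $l$ would not have been sitting below $l_1$ in $S$) or an overlapping pair, in which case the minimality of $|P|$ among overlapping pairs without separator implies that $(l, l_1)$ admits a separator; but then that separator forces $l$ and $l_1$ into a definite relative order, which again must be $l$ below $l_1$. A symmetric statement holds at $b$ with $l_2$.

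Combining these, if at \emph{both} $a$ and $b$ we are in the ``internal vertex'' situation, then $l$ is forced below $l_1$ (by what happens at $a$) and simultaneously forced above $l_2$ (by what happens at $b$) on the common subpath --- so $l$ itself would be a separator for $l_1, l_2$, contradicting our hypothesis. Therefore at least one endpoint is a terminal of the corresponding line, and we can push $l$ out on that side, carrying the push across all of $Q_l$ (which only swaps $l$ with $l_1$ or $l_2$, lines it did not cross in $S$). The main technical obstacle is verifying that this local push around the endpoint --- where $l$ may terminate, turn off, or continue --- does not create a new vertex or edge crossing; this requires a short case distinction in the $k$-side model showing that the port of $l$ at that vertex can be placed on the appropriate side of $l_1$'s (respectively $l_2$'s) port without disturbing other lines. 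After the push, the solution remains crossing-free and $|M|$ has strictly decreased, so the induction closes.
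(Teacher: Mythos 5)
There is a genuine gap, and it sits exactly where you park it as "the main technical obstacle" --- except the real obstacle is not the local case distinction at a vertex, but the other lines of $M$. Your plan reroutes \emph{any} $l \in M$ past $l_1$ (or $l_2$), and you assert that this "only swaps $l$ with $l_1$ or $l_2$". That is false in general: to move $l$ above $l_1$ it must also pass every line currently lying between $l$ and $l_1$, and the relative order of $l$ with such a line $l''$ may itself be forced --- by a split of $l$ and $l''$ outside $P$ (e.g.\ beyond $v$, where $l_1$ does not terminate), or by a separator of the pair $(l, l'')$. Indeed the universally quantified claim "for any $l\in M$" already fails: a middle line can be pinned from above and below by forced orders with its neighbors in $M$, even though it is not a separator of $(l_1,l_2)$. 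This is precisely why the paper's proof picks a specific line (the topmost one between $l_1$ and $l_2$ that \emph{overlaps} with $l_1$), forms the closure set $S$ of all lines lying between members of $S$, argues that no pair in $S$ has a separator, and then performs an iterative reordering of immediate-neighbor overlapping pairs inside $S$ rather than a single push. Your induction on $|M|$ cannot close without some replacement for that machinery.

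A second, related flaw is the endpoint dichotomy itself. From "$a=u$, so $l_1$ terminates at $a$" you conclude that nothing forces $l$ below $l_1$; but the common subpath of $l$ and $l_1$ may extend beyond $v$, and a split there (or a separator of $(l,l_1)$ --- minimality only guarantees separators for strictly \emph{shorter} overlapping pairs, it does not forbid longer or equal ones from having separators) can still force $l$ below $l_1$ even though $l_1$ ends at $u$. The sound version of your observation is the global one you use for the contradiction: $l$ cannot be simultaneously forced below $l_1$ and above $l_2$, else it is a separator of $(l_1,l_2)$. That much matches the paper's leverage of minimality (its first paragraph rules out terminals at intermediate stations of $P$ the same way). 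But "the order of $l$ versus $l_1$ is not forced" is weaker than "this particular solution can be modified to place $l$ above $l_1$ without touching anything else", and bridging that gap is the actual content of the paper's proof. As a side remark, your case "$l$ turns onto an edge not in $P$ at an internal $a$" cannot occur (it would be an avoidable vertex crossing), and if $l$ terminated at two internal vertices it would be a subpath of $l_1$; so the "both internal" branch is vacuous rather than handled by your separator argument.
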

\begin{proof}
  First, let us show that no line has its terminal in an intermediate station
	of the common subpath of $l_1$ and $l_2$. Suppose there is such a line
	$l$. Then $l$ forms an overlapping pair with either $l_1$
	or $l_2$---say $l_1$ without loss of generality---, whose common subpath is
	shorter than the one of $l_1$ and $l_2$.  Hence, there is a
  separator $l'$ of $l_1$ and $l_1$, which also separates $l_1$ and $l_2$ in contradiction to
  the choice of $l_1$ and $l_2$.

  Now, suppose there is no crossing-free solution in which $l_1$ and $l_2$ are
	immediate neighbors on their complete subpath. We fix a crossing-free solution in
  which the number of lines lying between $l_1$ and $l_2$ is minimal and
  suppose $l_1$ is above $l_2$.

  Let $l$ be the topmost line that lies between $l_1$ and $l_2$ in the
  solution and overlaps with
  $l_1$ (or symmetrically, the bottommost that overlaps with
  $l_2$). By modifying the ordering, considering
  lines above $l$ and below $l_1$, and some lines
  not overlapping with $l_1$, it is possible to reroute $l$ so that it does not lie between
  $l_1$ and $l_2$, in contradiction to the choice of the solution. To this end, let $S \supseteq
  \left\{ l_1,l \right\}$ be the smallest superset of $l$ and $l_1$ such that for
	any pair of lines $l',l'' \in S$ any line that lies between $l'$ and $l''$ in the
	solution is also contained in $S$; see
  Fig.~\ref{fig:separating-line-lemma-1-appendix}. Note that no pair of lines
  in $S$ has a separator as this would also be a separator
  for $l_1$ and $l_2$.

  \begin{figure}[h]
    \subfigure[Lines $S$ between $l_1$ and $l$.]{
      \centering
      \includegraphics[page=1]{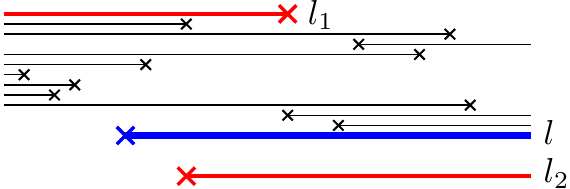}
    \label{fig:separating-line-lemma-1-appendix}
  }
    \hfill
    \subfigure[Lines between $l_1$ and $l$ reordered.]{
      \includegraphics[page=2]{pics/separating-line-lemma}
    \label{fig:separating-line-lemma-2-appendix}
  }
  \caption{Rerouting lines between $l_1$ and $l_2$. All shown lines
    share a subpath, which is shown in the drawings. At any position
  with a terminal, there is a node (not drawn).}
  \label{fig:separating-line-lemma-appendix}
  \end{figure}

  If $S=\left\{ l_1,l \right\}$ we can reroute $l$ to be above
  $l_1$. Otherwise, we apply the following procedure. For any pair of
  overlapping lines in this set that are immediate neighbors, that is,
  there is no line lying in between, we
  reroute the right one to be immediately above the left one, which is
  possible as there is no separating line. Eventually, $l$ will be
  above $l_1$, otherwise there would still be steps to be performed.
  Hence, we can create a solution in which there is at least one line
  less between $l_1$ and $l_2$, a contradiction.
\end{proof}

\newcounter{mlcmcfcounter}
\setcounter{mlcmcfcounter}{\value{theorem}}
\newcommand{\mlcmcf}{
  Any instance of \MLCM{} without unavoidable crossings has a
  crossing-free solution.
}

\begin{theorem}
  \mlcmcf
\label{thm:mlcm-planar}
\end{theorem}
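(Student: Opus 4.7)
The plan is to proceed by induction on the number of lines $|L|$. The base case $|L| \le 1$ is immediate, as any layout is crossing-free. For the inductive step, I would split into two cases according to whether the instance contains an overlapping pair without a separator.

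In the easy case, every overlapping pair has a separator. Then the relative order of every pair of lines is forced on every shared edge, either by the splitting argument preceding Lemma~\ref{lemma:merge-no-separator} or by the existence of a separator. The hypothesis that no pair forms an unavoidable crossing, together with the Observation, exactly says that these forced orders are pairwise consistent on each edge; they also fit together at each intermediate station, since the forced ``above/below'' relation is derived from the global splitting/terminating structure of the pair and is therefore the same at the two ports of any station the pair traverses. Extending the forced partial order to a total order on each edge yields a crossing-free line layout.

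In the hard case, I pick an overlapping pair $l_1,l_2$ without a separator whose common subpath $P$ (from $u$, a terminal of $l_1$, to $v$, a terminal of $l_2$) has the minimum number of edges, as in Lemma~\ref{lemma:merge-no-separator}. Using the fact established in the proof of that lemma that no other line terminates at an intermediate station of $P$, I glue $l_1$ and $l_2$ into a single line $l^\ast$ consisting of the non-shared part of $l_2$ past $u$, the subpath $P$, and the non-shared part of $l_1$ past $v$. Let $L' = (L\setminus\{l_1,l_2\})\cup\{l^\ast\}$. Applying induction to $(G,L')$ and then reinserting $l_1,l_2$ as immediate neighbors along $P$, in the slot previously occupied by $l^\ast$, produces a crossing-free layout for $(G,L)$; the fact that this reinsertion succeeds is exactly the content of Lemma~\ref{lemma:merge-no-separator}.

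The main obstacle is the verification that $(G,L')$ inherits the ``no unavoidable crossings'' property. I expect this to reduce to a case analysis on where a hypothetical unavoidable crossing of $l^\ast$ with some other line $l'$ occurs: if $l'$ meets $l^\ast$ inside the $l_1$-part or the $l_2$-part, an unavoidable crossing of $l'$ with $l_1$ or $l_2$ is already present in $(G,L)$, contradicting the hypothesis; if $l'$ meets $l^\ast$ on $P$, then because no terminal sits in the interior of $P$, the pair $(l',l_i)$ would either be forced into an unavoidable crossing in $(G,L)$ or would yield a separator for $l_1,l_2$, contradicting the choice of $l_1,l_2$. Once this verification is in place, the induction closes cleanly.
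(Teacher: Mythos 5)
Your overall architecture mirrors the paper's: merge a minimal unseparated overlapping pair using Lemma~\ref{lemma:merge-no-separator}, check that merging preserves the absence of unavoidable crossings (your case analysis for $l^\ast$ versus $l'$ is essentially the paper's verbatim), iterate until every overlapping pair has a separator, and then order the lines edge by edge. Recasting the iteration as an induction on $|L|$ is fine, and the unmerge/reinsertion step is acceptable at the paper's own level of rigor.

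The genuine gap is in your ``easy case.'' You assert that the hypothesis of no unavoidable crossings, via the Observation, ``exactly says that these forced orders are pairwise consistent on each edge,'' and that one can therefore ``extend the forced partial order to a total order.'' But a family of pairwise-consistent binary above/below constraints is not automatically a partial order: pairwise consistency only excludes directed $2$-cycles in the relation digraph on $L_e$, and a digraph (indeed a tournament, once separators are used to orient the remaining pairs) with no $2$-cycles can still contain directed cycles of length $3$ or more, in which case no total order exists. Ruling these out is where the paper spends most of its effort: for a $3$-cycle $(l_1,l_2,l_3)$ it argues that the splittings realizing the edges $(l_2,l_3)$ and $(l_3,l_1)$ cannot be placed on the common subpath of $l_1,l_2$ (else a chord appears) and cannot be distributed to its two ends without creating an unavoidable crossing with $l_1$ or $l_2$; for longer cycles it extracts a chordless path $(l_1,l_2,l_3,l_4)$, notes that $l_2$ and $l_4$ are then separators of the overlapping pair $l_1,l_3$ in opposite orders, and derives an unavoidable crossing of $l_2$ with $l_4$ (Fig.~\ref{fig:mlcm-no-cycle-length-4}). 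Your proposal contains no substitute for this acyclicity argument, so the final step ``extend to a total order'' is unjustified as written.
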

\begin{proof}
Consider an instance of \MLCM{} without unavoidable crossings.
Using Lemma~\ref{lemma:merge-no-separator}, we can merge a pair
of overlapping lines  without a separator into a new line. The merging cannot
introduce an unavoidable crossing: Suppose there would be a line $l$ forming an
unavoidable crossing with the merged line $l'$ of $l_1$ and $l_2$. Indeed,
$l$ and $l'$ have to split on both ends with with different side
constraints. The splits have to be on different sides of the common subpath of
$l_1$ and $l_2$, otherwise there already was an unavoidable crossing of
$l$ with $l_1$ or $l_2$. From the splits we get relative orders of $l$ with
$l_1$ and $l_2$ such that either $l_1$ is above $l$ and $l$ is above $l_2$, or
$l_2$ is above $l$ and $l$ is above $l_1$. In both cases $l$ already was a
separator for $l_1$ and $l_2$ and we would not have merged them.

We iteratively perform merging steps until any overlapping pair has a
separator. Note that there might be multiple separators for a pair, but
all of them separate the pair in the same relative order;
otherwise, we would have a pair of separators with an unavoidable
crossing; see Fig.~\ref{fig:mlcm-no-cycle-length-4}.

After the merging steps, for any pair of lines sharing an edge, we either get a unique relative
order for crossing-free solutions, or the pair has a separator.

We now create a directed \textbf{relation graph} $G_{e}$ for any edge $e \in E$.
Vertices of the graph are the lines $L_{e}$ passing through $e$. Edges of
$G_e$ model the relative order of the lines in a crossing-free solution; we have
an edge $(l_1, l_2)$ (similarly, $(l_2, l_1)$) in $G_{e}$ if $l_1$ and
$l_2$ split in such a way that $l_1$ is above (below) $l_2$ in any crossing-free solution.

Let us prove that all relation graphs are acyclic.
Suppose there is a cycle in a relation graph $G_e$.
We choose the shortest cycle $C$. A cycle of
length~2 is equivalent to a pair of lines with an unavoidable crossing;
hence, such a cycle cannot exist.

  Now, suppose there is a cycle $C=(l_1,l_2,l_3)$ of length $3$. Lines
  $l_1$ and $l_2$ share a common subpath and split on one side in the
  order $(l_1,l_2)$. The splitting for realizing the edge $(l_2,l_3)$
  can not be realized on this subpath, otherwise we would also get the
  edge $(l_1,l_3)$. Similarly, the splitting for $(l_3,l_1)$ also can
  not be realized on the subpath. Hence, we have to distribute the two
  splittings to the two sides of the subpath, which is not possible
  without introducing an unavoidable crossing with $l_1$ or
  $l_2$; see Fig.~\ref{fig:mlcm-no-cycle-length-3-appendix}.

  Finally, if the cycle $C$ is longer then there exists a path
  $(l_1,l_2,l_3,l_4)$ of length four without chords.
  As there is no edge between $l_1$ and $l_3$, they have to be an overlapping pair
  and $l_2$ is a separator for them. On the other hand,
  $l_4$ is also a separator for $l_1$ and $l_3$, but separates them in
  another relative order. It is easy to see that there is an
  unavoidable crossing of $l_2$ and $l_4$, a contradiction; see
  Fig.~\ref{fig:mlcm-no-cycle-length-4}. Hence, the relation graphs are acyclic.

  \begin{figure}[h]
    \subfigure[]{
      \includegraphics[page=2]{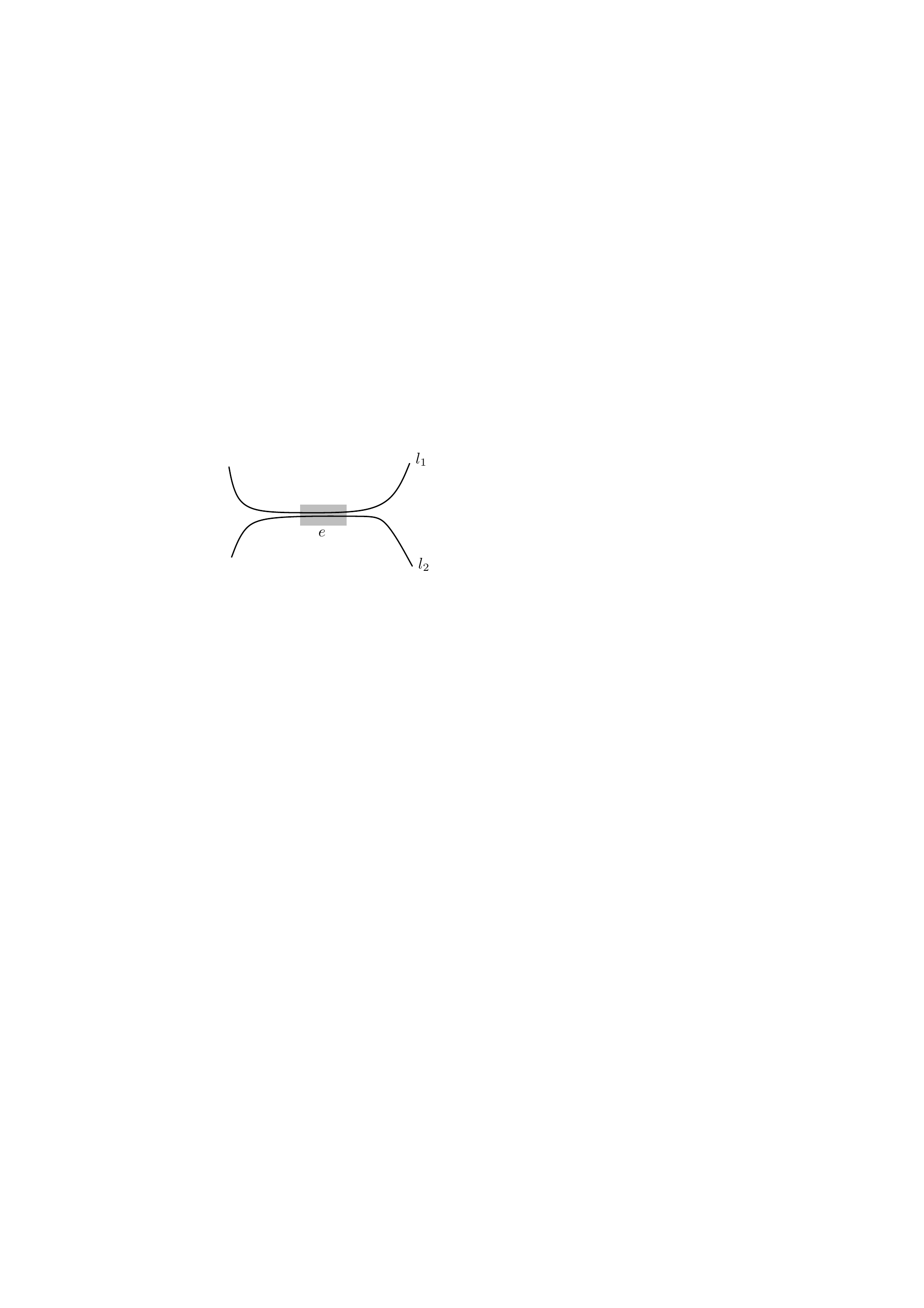}
    }
    \hfill
    \subfigure[]{
      \includegraphics[page=4]{pics/mlcm-cf-3-lines}
    }
    \caption{There is no cycle of 3 lines, if $l_1$ and $l_2$ split
    (a) on both sides or (b) only on one side, without an unavoidable
  crossing.}
    \label{fig:mlcm-no-cycle-length-3-appendix}
  \end{figure}

Now, in a relation graph $G_e$ for any pair of lines $l_1,l_2 \in L_e$,
there is either a directed edge connecting them in $G_e$, or the lines are overlapping.
Since $G_e$ is acyclic, there exists a unique topological ordering of the lines $L_e$.
We get a crossing-free solution by using the ordering for every edge. As the
relative order of any pair of lines is the same for all edges, there cannot be a crossing.
\end{proof}

The proof yields an algorithm for finding a crossing-free solution. It takes
$O(|L|^2 |E|)$ time for deleting subpaths as well as iteratively merging the shortest
unseparated overlapping pair. Finally, we can get the relative order of each pair
of lines on all edges in $O(|L|^2|E|)$ time and order the lines on all edges.
Hence, after reinserting the deleted or merged lines, we get a
crossing-free solution in $O(|L|^2|E|)$ time.

%


\section{The \MLCMP{} Problem}
\label{sec:mlcm-p}

\label{sec:mlcm-p-2sat}
Let $\left( G = (V,E), L \right)$ be an instance of \MLCMP{}. Our goal
is to decide for each line end on which side of its
terminal port it should lie. For convenience, we arbitrarily choose
one side of each port and call it ``top'', the opposite side is
called ``bottom''.
For each line $l$ starting at vertex $u$ and ending at
vertex $v$, we create binary variables $l_{u}$ and
$l_{v}$, which are true if and only if $l$ terminates at
the top side of the respective port.
We formulate the problem of finding a truth assignment that minimizes
the number of crossings as a 2SAT instance for the given instance of
\MLCMP{}. Note that Asquith~et~al.~\cite{asquith08} already used 2SAT
clauses as a tool for developing their ILP for \MLCM{}, where the
variables represent above/below relations between line ends.
In contrast, in our model a variable directly represents the
position of a line on the top or bottom side of a port. We first prove
a simple property of lines.

\begin{lemma}
	Let $l,l'$ be a pair of lines sharing a terminal. We can transform any solution in
		which $l$ and $l'$ cross to a solution with fewer crossings in which the
		lines do not cross.
	\label{lemma:2-terminals-same-vertex}
\end{lemma}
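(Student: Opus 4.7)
\smallskip

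\noindent\textbf{Proof plan.}
The plan is to construct a local modification of the solution that uncrosses $l$ and $l'$ on their common subpath without increasing the number of crossings involving any other line. Let $v$ be the shared terminal. First I would argue that any crossing of $l$ and $l'$ must occur on a common subpath $P$ starting at $v$: indeed, an edge crossing requires a shared edge, and a vertex crossing requires each line to appear at least twice in the cyclic port order at a shared vertex, which is impossible at the terminal $v$. Combined with the simplifying assumption that two lines share at most one common subpath, this forces the common subpath to have $v$ as one endpoint; write $P = (v = v_0, v_1, \dots, v_k)$, where $l$ and $l'$ split at $v_k$ (or at $v$, in which case they share no edge and cannot cross at all).

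Next I would exhibit the uncrossing transformation. On each edge $e_i$ of $P$, lines $l$ and $l'$ occupy two positions; regard this as an unordered pair $\{\alpha_i, \beta_i\}$. The modification fixes these unordered pairs and reassigns the \emph{labels} so that, say, $l$ always takes the upper slot and $l'$ always takes the lower one throughout $P$. At $v$ this is consistent with the periphery condition because both lines terminate there and may be placed on either side of the peripheral block; at the internal vertices of $P$ the reassignment is consistent because $l$ and $l'$ traverse the same port pairs; and at the split vertex $v_k$ the two lines leave along different edges, so by rerouting them inside the $k$-side polygon of $v_k$ we can realize the swapped assignment without creating an avoidable vertex crossing.

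The key invariance argument is that no third line is affected. For any line $m$ sharing an edge with $\{l, l'\}$, the relation ``$m$ lies above both, between, or below both'' is determined by the unordered pair $\{\alpha_i, \beta_i\}$, which our transformation preserves. Hence the number of crossings of $m$ with $l$ plus the number of crossings of $m$ with $l'$ is unchanged on every edge and at every vertex (the count depends only on how this three-way relation changes along $m$'s route). Meanwhile, after the transformation $l$ and $l'$ have a consistent relative order along all of $P$, so all their mutual crossings are eliminated; since the original solution had at least one, the total number of crossings strictly decreases.

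The main obstacle I expect is verifying the split-vertex case cleanly: one has to argue that, at $v_k$, reshuffling the interior connections of $l$ and $l'$ between the port of $(v_{k-1}, v_k)$ and the two distinct outgoing ports does not create an avoidable vertex crossing with a third line. This reduces to the observation that at $v_k$ we are only exchanging which of $l, l'$ sits in which of two already-occupied slots of the common incoming port, while the outgoing ports remain in their original configuration; the resulting rewiring inside $v_k$ can always be drawn without avoidable vertex crossings because $l$ and $l'$ have disjoint continuations.
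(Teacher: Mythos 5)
Your proposal is correct and takes essentially the same route as the paper: the paper's (two-sentence) proof likewise swaps the terminal positions of $l$ and $l'$ at the common terminal and exchanges their routes between the crossing and $v$, which preserves the set of occupied slots --- and hence every interaction with a third line --- while eliminating the mutual crossing. One small correction to your split-vertex discussion: at $v_k$ the relative order of $l$ and $l'$ on the common port is \emph{forced} (only one of the two orders avoids an avoidable vertex crossing between them inside $v_k$), so the consistent order you impose along $P$ is not a free choice but must be the one the feasible input solution already realizes on the last edge of $P$; with that choice the rest of your argument goes through unchanged.
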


\begin{proof}
Assume $l$ and $l'$ cross in a solution.
We switch the positions of line ends at the common terminal $v$ between
$l$ and $l'$ and reroute the two lines between the crossing's
position and $v$. By reusing the route of $l$ for $l'$ and vice versa,
the number of crossings does not increase. On the other
hand, the crossing between $l$ and $l'$ is eliminated.
\end{proof}

Let $l, l'$ be two lines whose common
subpath $P$ starts at vertex $u$ and ends at $v$. Observe that
terminals of $l$ and $l'$ that lie on $P$ can only be at $u$ or
$v$. If neither $l$ nor $l'$ has a terminal on $P$ then a crossing
of the lines does not depend on the positions of the terminals;
hence, we assume that there is at least one terminal at $u$ or $v$.
A possible crossing between $l$ and $l'$ is modeled by a 2SAT formula,
the \df{crossing formula}, consisting of at most two clauses. The
crossing formula evaluates to
true if and only if $l$ and $l'$ do not cross. For simplicity, we
assume that the top sides of the terminal ports of $u$ and $v$ are
located on the same side of $P$; see Fig.~\ref{fig:se}. If
it is not the case, a variable $l_u$ should be substituted with its inverse $\neg l_u$
in the formula. We consider several cases; see also
Fig.~\ref{fig:crossing-formulas}.

\begin{enumerate}[label=(f\arabic*)]
\item Suppose $u$ and $v$ are terminals for $l$ and intermediate stations for $l'$,
that is, $l$ is a subpath of $l'$. Then, $l$ does not
cross $l'$ if and only if both terminals of $l$ lie on the same side of $P$.
This is expressed by the crossing formula $(l_u \wedge l_v) \vee
(\neg l_u \wedge \neg l_v) \equiv (\neg l_u \vee l_v) \wedge
(l_u \vee \neg l_v)$, which may occur multiple times, caused by
a different $l'$.
\label{formula-subpath}

\item Suppose $u$ is a terminal for $l$ and intermediate for $l'$,
and $v$ is a terminal for $l'$ and intermediate for $l$. Then there is no
crossing if and only if both terminals lie on opposite sides of $P$.
This is described by the formula $(l_u \wedge \neg l_v') \vee
(\neg l_u \wedge l_v') \equiv (l_u \vee l_v') \wedge (\neg l_u \vee
\neg l_v')$.
\label{formula-overlap}

\item Suppose both $l$ and $l'$ terminate at the same vertex $u$ or $v$.
By Lemma~\ref{lemma:2-terminals-same-vertex}, a solution
of \MLCMP{} with a crossing of $l$ and $l'$ can be transformed into a
solution in which $l$ and $l'$ do not cross.
Hence, we do not introduce formulas in this case.
\label{formula-same end}

\item In the remaining case, there is only one terminal of $l$ and $l'$ on $P$.
Without loss of generality, let $l$ terminate at $u$. A crossing
is triggered by a single variable. Depending on the
fixed terminals or leaving edges at $v$ and $u$, we get the single clause
$(l_{u})$ or $(\neg l_{u})$. Note that the same clause can occur
multiple times, caused by different lines $l'$.
\label{formula-just-one}
\end{enumerate}

\begin{figure}[t]
	\centering
		\includegraphics[page=1]{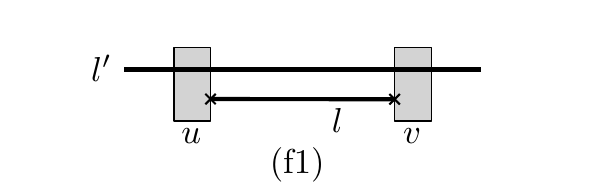}
		\includegraphics[page=2]{pics/crossing-formulas}
		\includegraphics[page=3]{pics/crossing-formulas}
		\includegraphics[page=4]{pics/crossing-formulas}
	\caption{Four cases for crossing formulas:
	(f1) $(l_u \wedge l_v) \vee (\neg l_u \wedge \neg l_v) \equiv (\neg l_u \vee
	l_v) \wedge (l_u \vee \neg l_v)$;
	(f2) $(l_u \wedge \neg l_v') \vee (\neg l_u \wedge l_v') \equiv (l_u \vee
	l_v') \wedge (\neg l_u \vee \neg l_v')$;
	(f3) crossing can always be removed;
	(f4) $(l_{u})$.
	}
	\label{fig:crossing-formulas}
\end{figure}

\begin{figure}[t]
  \centering
  \subfigure[An instance ($G,L$) of \PrMLCMP{}.]{
      \includegraphics[]{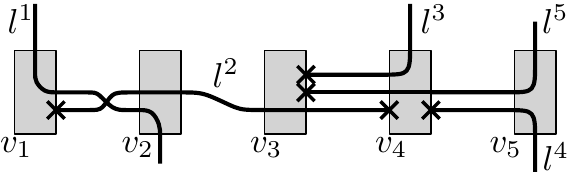}
    \label{fig:sea}}
    \hfill
    \subfigure[Graph $G_{ab}$ for the instance $(G,L)$.]{
      \includegraphics[width=0.4\textwidth]{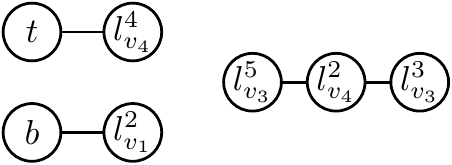}
    \label{fig:seb}}
    \caption{
      A small instance of \MLCMP. The generated 2SAT formulas are:
			$(l^{2}_{v_{1}})$ for the crossing of $l^1$ and $l^2$; $(\neg l_{v_4}^4)$
			for the crossing of $l^5$ and $l^4$;
    $(l_{v_4}^2 \vee l_{v_3}^3) \wedge (\neg l_{v_4}^2 \vee \neg
    l_{v_3}^3)$ for the crossing of $l^2$ and $l^3$;
    $(l_{v_4}^2 \vee l_{v_3}^5) \wedge (\neg l_{v_4}^2 \vee \neg
    l_{v_3}^5)$ for the crossing of $l^2$ and $l^5$.
    }
  \label{fig:se}
\end{figure}

\paragraph{Crossing-free solutions.}
Note that the 2SAT formulation of the problem yields an algorithm for deciding
whether there exists a crossing-free solution of an \MLCMP{}
instance. First, we check for unavoidable crossings by analyzing
every crossing formula individually. Second, the 2SAT model is
satisfiable if and only if there is a solution of the \MLCMP{} instance
without avoidable crossing. Since 2SAT can be solved in linear
time and there are at most $|L|^2$ crossing formulas, we conclude as
follows.

\begin{theorem}
\label{thm:2sat}
Deciding whether there exists a crossing-free solution for
\MLCMP{} can be accomplished in $O(|E||L|^2)$ time.
\end{theorem}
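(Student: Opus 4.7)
The plan is to turn the preceding 2SAT reduction into an explicit algorithm and bound the work phase by phase.

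First, I would enumerate every unordered pair of lines $(l, l') \in L \times L$; for each pair, identify its common subpath and the applicable case (f1)--(f4) in $O(|E|)$ time. This pass produces two things: (i) the crossing formulas for pairs whose common subpath contains at least one terminal of $l$ or $l'$, and (ii) a verdict for every remaining pair---those whose common subpath carries no terminal at all---obtained by inspecting how $l$ and $l'$ split at the two ends. Such a pair is either automatically crossing-free or forms an unavoidable crossing that cannot be removed by any terminal-side assignment. If any unavoidable crossing is detected, the algorithm halts and reports ``no crossing-free solution''. Otherwise, this phase yields a 2SAT formula $\Phi$ with $O(|L|^2)$ clauses of length at most two, constructed in $O(|E||L|^2)$ total time.

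Second, I would run a linear-time 2SAT solver on $\Phi$ and answer ``yes'' iff $\Phi$ is satisfiable. Correctness relies on two directions. The easy direction: a crossing-free \MLCMP{} layout induces the variable assignment ``$l_v$ true iff the terminal of $l$ at $v$ sits on the top side of its port''; by design, every crossing formula evaluates to true under this assignment, because each formula was built so as to be false precisely when the corresponding pair crosses.

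The harder direction---and the main obstacle---is turning a satisfying 2SAT assignment into a globally consistent crossing-free layout. Pairwise ``no crossing'' constraints must combine into a total order of lines on each edge. To handle this, I would fix terminal sides according to the satisfying assignment, observe that this forces a well-defined above/below relation between every two lines sharing an edge, and then invoke the relation-graph construction from the proof of Theorem~\ref{thm:mlcm-planar}. The key point is that any directed cycle in such a relation graph would propagate, via the implication graph of the 2SAT instance, into a chain forcing some literal to imply its own negation---contradicting satisfiability of $\Phi$. Hence every relation graph is acyclic, a topological sort on each edge yields the desired line layout, and the periphery condition is automatically respected because terminal sides were fixed up front. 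Summing the construction cost and the linear-time 2SAT solve gives the claimed $O(|E||L|^2)$ bound.
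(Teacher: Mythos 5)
Your algorithm and running-time analysis coincide with the paper's: enumerate the $O(|L|^2)$ pairs of lines, spend $O(|E|)$ per pair to locate the common subpath and classify it into cases (f1)--(f4) or an unavoidable crossing, and then run a linear-time 2SAT solver on the $O(|L|^2)$ resulting clauses. You are also right to single out the direction ``satisfying assignment $\Rightarrow$ globally consistent crossing-free layout'' as the crux; the paper asserts this equivalence in one sentence, so identifying it explicitly is to your credit.

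The argument you offer for that direction, however, does not go through. A directed cycle in a relation graph on an edge $e$ is a cyclic above/below relation among three or more lines, i.e., an inherently ternary (or higher-arity) inconsistency, whereas the implication graph of $\Phi$ contains only edges derived from the two-literal crossing formulas, which are pairwise constraints on terminal-side variables. There is no general mechanism by which such a cycle produces a path from a literal to its own negation: each of the pairwise formulas along the cycle can be individually satisfied by the very assignment that creates the cycle, so satisfiability of $\Phi$ does not, by itself, forbid the cycle. (Note also that the relation graphs in the proof of Theorem~\ref{thm:mlcm-planar} are only defined after subpaths have been removed and unseparated overlapping pairs merged, so they cannot be invoked verbatim here.) To close the gap you need a combinatorial argument rather than a 2SAT one: either observe that fixing the terminal sides turns the instance into one of \MLCMPA{}, which is equivalent to an \MLCM{} instance with all line ends at degree-one vertices, in which a pair has an unavoidable crossing exactly when its crossing formula is violated, so that Theorem~\ref{thm:mlcm-planar} supplies the crossing-free layout; or redo the cycle-length-$3$ and cycle-length-$\ge 4$ case analysis of that proof in the fixed-sides setting, where relation-graph edges arise either from splits or from the periphery condition at an assigned terminal. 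Finally, fixing the sides does not ``automatically'' give the periphery condition; the constructed orders must still place each terminal outermost on its assigned side, which the reduction above also takes care of.
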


For \MLCM{} the existence of a crossing-free solution is equivalent to
the absence of unavoidable crossings. In contrast, there is no such
simple criterion for \MLCMP{}. Moreover, for any $k$, there is an
instance with $k$ lines such that any subset of $k-1$ lines
admits a crossing-free solution, while $k$ lines require at least one crossing;
see Fig.~\ref{fig:no-crossing-free-criterion}.

\begin{figure}[ht]
	\centering
    \includegraphics[height=3cm]{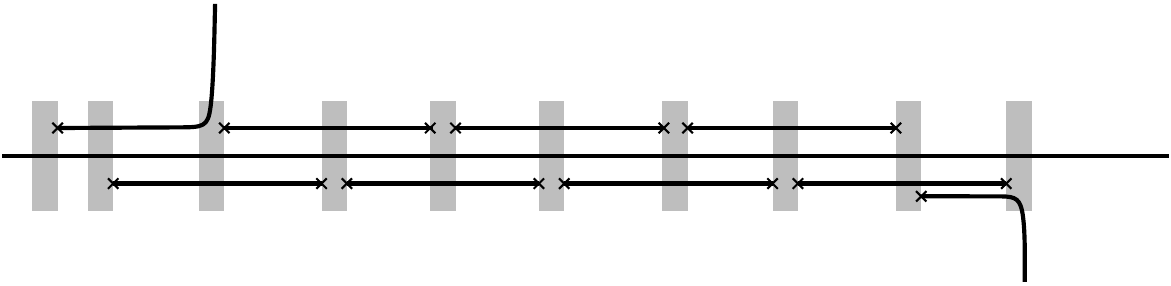}
  \caption{Example of an instance of \MLCMP{} that does not have a
  crossing-free solution. There is, however, no small substructure of
  lines that gives a contradiction to the assumed existence of a
crossing-free solution. Any proper subset of the lines allows a
crossing-free solution. Note that the example can easily be extended
to an arbitrary number of lines.}
  \label{fig:no-crossing-free-criterion}
\end{figure}

\paragraph{Fixed-parameter tractability.}
We can use the 2SAT model for obtaining a fixed-parameter tractable algorithm on
the number $k$ of allowed crossings. We must show that we can check in
$f(k) \cdot \mathrm{poly}(l)$ time whether there is a solution with at most $k$
avoidable crossings, where $f$ must be a computable function and
$l$ is the input size.

First, note that minimizing the number of crossings is
the same as maximizing the number of satisfied clauses in the
corresponding 2SAT instance. Maximizing the number of satisfied
clauses, or solving the \prob{MAX-2SAT} problem, is \NP{}-hard.

However, the problem of deciding whether it is possible to remove
a given number $k$ of $m$ 2SAT clauses so that the formula becomes
satisfiable is fixed-parameter tractable with respect to the
parameter~$k$ \cite{Razgon2009}. This yields the following
observation.

\begin{theorem}
  \MLCMP{} is fixed-parameter tractable with respect to the maximum
  allowed number of avoidable crossings.
  \label{theorem:mlcp-p-fpt-crossing-number}
\end{theorem}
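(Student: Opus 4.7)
The plan is to reduce the decision version of \MLCMP{}---``is there a layout with at most $k$ avoidable crossings?''---to the Almost-2SAT problem of Razgon and O'Sullivan~\cite{Razgon2009}, and invoke their FPT algorithm. The 2SAT formulation developed in this section already encodes every potential avoidable crossing of a pair of lines as a crossing formula of one or two clauses that evaluates to true exactly when that pair does not cross; a truth assignment specifies terminal sides for every line end, thereby inducing a line layout whose number of avoidable crossings equals the number of crossing formulas falsified by the assignment.

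The first step is to establish the tight identity ``number of avoidable crossings in the induced layout $=$ number of clauses falsified by the assignment, counted with multiplicity.'' For single-clause formulas (f4) this is immediate. For the two-clause formulas (f1) and (f2), a short truth-table check shows that the two clauses jointly encode an equivalence $l_u \leftrightarrow l_v$ or its negation $l_u \leftrightarrow \neg l_{v'}$; such a formula is either entirely satisfied or has exactly one of its two clauses violated, never both. Hence every violated formula contributes exactly one falsified clause, and, since the same clause may arise from distinct pairs, one treats repeated occurrences as distinct so that summing across all pairs yields the claimed equality.

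Consequently, ``the induced layout has at most $k$ avoidable crossings'' is equivalent to ``the 2SAT multiset admits an assignment falsifying at most $k$ clauses,'' which is the same as ``one can delete at most $k$ clauses (with multiplicity) from the formula so that the remainder is satisfiable.'' This is precisely Almost-2SAT parameterized by $k$, shown to be fixed-parameter tractable in~\cite{Razgon2009}. Feeding the 2SAT multiset, built in polynomial time and containing $O(|L|^2)$ clauses, to their algorithm decides the question in $f(k) \cdot \mathrm{poly}(|V|+|E|+|L|)$ time.

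The main subtlety, and the step I would check most carefully, is the exact one-to-one correspondence between violated crossing formulas and violated individual clauses in cases (f1) and (f2): if some assignment could simultaneously falsify both clauses of a two-clause formula, the reduction would yield FPT in $2k$ rather than $k$. The case analysis above rules this out because an equivalence-type constraint cannot be doubly falsified by any single assignment. With that observation the parameter is preserved exactly, and the cited Almost-2SAT result immediately yields the theorem.
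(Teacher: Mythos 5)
Your proposal is correct and follows essentially the same route as the paper: both arguments hinge on the observation that each crossing formula, being an equivalence-type constraint in cases (f1) and (f2), can have at most one of its two clauses falsified by any assignment, so that violated crossing formulas and falsified clauses (counted with multiplicity across pairs) are in exact one-to-one correspondence, after which the fixed-parameter tractable algorithm of Razgon and O'Sullivan for deleting at most $k$ clauses from a 2SAT formula yields the result. Your explicit truth-table justification of the ``at most one clause per formula is falsified'' step is a slightly more careful rendering of the same key point the paper states without elaboration.
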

\begin{proof}
  We show that the SAT formula can be made satisfiable by removing at
  most $k$ clauses if and only if there is a solution with at
  most $k$ crossings.

  First, suppose it is possible to remove at most $k$ clauses from the
  2SAT model so that there is a truth assignment satisfying all
  remaining clauses. Fix such a truth assignment, and consider the
  corresponding assignment of sides to the terminals. Any
  crossing leads to an unsatisfied clause in the SAT formula,
  and no two crossings share an unsatisfied clause. Hence, we have
  a side assignment that causes at most $k$ crossings.

  Now, we assume that there is an assignment of sides for all
  terminals that causes at most $k$ crossings.
  We know that in the corresponding truth assignment for all pairs of clauses of
  the SAT model at most one is unsatisfied. Hence, there are at most
  $k$ unsatisfied clauses since any crossing just leads to a single
  unsatisfied clause. The removal of these clauses creates a new,
  satisfiable formula.
\end{proof}
Using the $O(15^k k m^3)$-time algorithm for 2SAT~\cite{Razgon2009}
our algorithm has a running time of $O(15^k \cdot k \cdot |L|^6 +
|L|^2|E|)$.

\paragraph{Approximating \MLCMP{}.}
The proof of Theorem~\ref{theorem:mlcp-p-fpt-crossing-number} yields
that the number of crossings in a
crossing-minimal solution of \MLCMP{} equals the minimum number of
clauses that we need to remove from the 2SAT formula in order to
make it satisfiable. Furthermore, a set of
$k$ clauses, whose removal makes the 2SAT formula satisfiable,
corresponds to an \MLCMP{} solution with at most $k$ crossings. Hence,
an approximation algorithm for the problem of making a 2SAT formula
satisfiable by removing the minimum number of clauses (also called
\prob{Min 2CNF Deletion}) yields an approximation for \MLCMP{} of the
same quality. As there is an $O(\sqrt{\log m})$-approximation
algorithm for \prob{Min 2CNF deletion} \cite{Agarwal05}, we have the
following result.
\begin{theorem}
\label{thm:approx}
There is an $O(\sqrt{\log |L|})$-approximation algorithm for \MLCMP{}.
\end{theorem}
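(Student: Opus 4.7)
The plan is to combine the 2SAT formulation from Section~\ref{sec:mlcm-p-2sat} with the known approximation algorithm for \textsc{Min 2CNF Deletion} of Agarwal et al. The core observation, already implicit in the proof of Theorem~\ref{theorem:mlcp-p-fpt-crossing-number}, is that the optimum number of crossings in an \MLCMP{} instance equals the minimum number of clauses whose removal makes the associated 2SAT formula satisfiable. So we just need to translate an approximate deletion set back to a line layout of comparable cost.

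First I would formalize the two directions of this correspondence. Given a line layout with $k$ avoidable crossings, the induced truth assignment on the terminal-side variables falsifies at most one clause per crossing (by case analysis of (f1)--(f4): formulas (f1) and (f2) contribute two clauses of which a single crossing falsifies exactly one, while (f4) contributes only one clause; (f3) contributes no clauses because Lemma~\ref{lemma:2-terminals-same-vertex} removes such crossings for free). Deleting these at most $k$ falsified clauses yields a satisfiable formula. Conversely, from any deletion set of size $d$ whose removal admits a satisfying assignment, reading off the truth values of the variables $l_u$ gives a side assignment; the resulting layout has at most $d$ crossings because every crossing corresponds to a falsified clause of the original formula, and all falsified clauses must lie in the deletion set.

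Second, I would apply the $O(\sqrt{\log m})$-approximation of Agarwal, Charikar, Makarychev, and Makarychev~\cite{Agarwal05} for \textsc{Min 2CNF Deletion}, where $m$ is the number of clauses. Since each ordered pair of lines contributes at most two clauses, we have $m = O(|L|^2)$, so the approximation ratio becomes $O(\sqrt{\log |L|^2}) = O(\sqrt{\log |L|})$. Running this algorithm, reconstructing the side assignment as above, and then completing the line orders on interior ports (which is forced once the periphery sides are fixed, up to crossings already counted by the clauses) yields the claimed approximation.

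The only mildly delicate point is the bookkeeping in the correspondence, in particular making sure that distinct crossings really correspond to distinct falsified clauses so that the ratio does not blow up. This is handled by the fact that each unordered pair of lines generates its own private crossing formula, and a single pair crosses at most once on its common subpath in our model; thus crossings and falsified clauses are in one-to-one correspondence, and the approximation factor transfers directly.
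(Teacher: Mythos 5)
Your proposal is correct and follows essentially the same route as the paper: it reduces \MLCMP{} to \prob{Min 2CNF Deletion} via the 2SAT model, uses the crossing--falsified-clause correspondence already established in the proof of Theorem~\ref{theorem:mlcp-p-fpt-crossing-number}, and applies the $O(\sqrt{\log m})$-approximation of Agarwal et al.\ with $m = O(|L|^2)$. The only difference is that you spell out the bookkeeping (one falsified clause per crossing, and the $m = O(|L|^2)$ bound) slightly more explicitly than the paper does, which is fine.
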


\comm{
 \label{sec:mlcm-p-approximation}
}


\section{The \PrMLCMP{} Problem}
\label{sec:p-mlcm-p}

In this section we consider the \PrMLCMP{} problem, where no line in $L$
is a subpath of another line. First we focus on graphs whose
underlying network is a caterpillar. There, the top and bottom sides
of ports are given naturally; see Fig.~\ref{fig:se}.

Based on the 2SAT model described in the previous section, we construct a
graph $G_{ab}$, which has a vertex $l_u$ for each variable of the model and two additional
vertices $b$ and~$t$. Since no line is a subpath of another line,
our 2SAT model has only the two types of crossing formulas
\ref{formula-overlap} and \ref{formula-just-one}; compare
Section~\ref{sec:mlcm-p-2sat}. For case~\ref{formula-overlap}, we create an edge $(l_u, l'_v)$.
The edge models a possible crossing between lines $l$ and $l'$; that is, the
lines cross if and only if $l$ terminates on top (bottom) of $u$ and $l'$
terminates on top (bottom) of $v$.  For a crossing formula of type
$(l_u)$ (case~\ref{formula-just-one}), we add an edge $(b, l_u)$ to $G_{ab}$; similarly, we add an edge $(t,
l_u)$ for a formula $(\neg l_u)$; see Fig.~\ref{fig:seb} for an example.

Any truth assignment to the variables
is equivalent to a $b$-$t$ cut in $G_{ab}$, that is, a cut separating $b$ and $t$.
Indeed, any edge in the graph models the fact that two lines should not be
assigned to the same side as they would cause a crossing otherwise.
Hence, any line crossing corresponds to an \emph{uncut} edge.
Therefore, to find a line layout with the minimum number of crossings,
we need to solve the known \prob{MIN-UNCUT} problem:

\begin{problem}[MIN-UNCUT]
Given a graph, divide its vertices into two partitions $S_t, S_b$ so that
the number of uncut edges ($(v, u)$ with $v,u\in S_t$ or $v,u\in S_b$) is minimized.
\end{problem}

Although \prob{MIN-UNCUT} is \NP{}-hard, it turns out that the graph $G_{ab}$ has a
special structure, which we call \df{almost bipartite}.

\begin{definition}
A graph $G=(V,E)$ is called \emph{almost bipartite} if it is a union
of a bipartite graph $H=(V_H,E_H)$ and two additional vertices $b, t$
whose edges may be incident to vertices of both partitions of $H$, that is,
$V=V_H\cup\{b\}\cup\{t\}$ and $E=E_H\cup E'$, where $E' \subseteq \{(b,
v) \mid v\in V\}\cup\{(t, v)\mid v\in V\}$.
\end{definition}

The bipartition is given by the fact that ``left'' (similarly, ``right'') terminals
of two lines can never be connected by an edge in $G_{ab}$. We show that \prob{MIN-UNCUT}
can be solved optimally for almost bipartite graphs.

\begin{figure}[t]
	\centering
    \includegraphics[width=\linewidth]{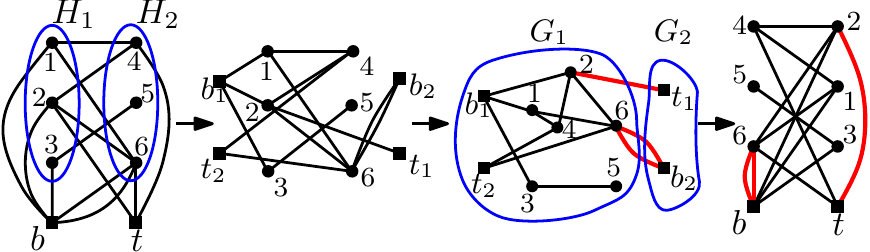}
  \caption{Solving \prob{MIN-UNCUT} on an almost bipartite graph.
    The maximum flow (minimum cut) with value $3$ results in vertex partitions $V^1_b=\{b_1, 4, 5, 6\}$,
  $V^1_t=\{t_2, 1, 2, 3\}$, $V^2_b=\{b_2\}$, and $V^2_t=\{t_1\}$. The optimal partition
  $S_b = \{b, 4, 5, 6\}, S_t = \{t, 1, 2, 3\}$ induces $3$ uncut edges $(b, 6), (b, 6), (t, 2)$.}
  \label{fig:ab}
\end{figure}

\begin{theorem}
\label{thm:uncut}
\prob{MIN-UNCUT} can be solved in polynomial time on almost bipartite graphs.
\end{theorem}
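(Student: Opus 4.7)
The plan is to reduce MIN-UNCUT on an almost bipartite graph $G$ to a pair of $s$-$t$ minimum cut computations, exploiting the bipartition $V_H = V_1 \cup V_2$ of the bipartite part $H$. The key reformulation is a relabeling trick: for any candidate partition $(S_b, S_t)$ of $V$, encode each vertex $v$ by the indicator $x_v \in \{0,1\}$ of $v \in S_t$, and then flip the value on every $v \in V_2$, setting $x_v' = x_v$ for $v \in V_1$ and $x_v' = 1 - x_v$ for $v \in V_2$. After this substitution, an edge $(u, w) \in E_H$ is uncut if and only if $x_u' \neq x_w'$, so bipartite edges now penalize the \emph{separation} of their endpoints, which matches the standard objective of $s$-$t$ minimum cut.

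With the bipartite edges handled, I deal with the two special vertices $b$ and $t$ case by case. By the symmetry of MIN-UNCUT under global complementation of labels, I fix $x_b = 0$ and consider the subcases $x_t = 1$ and $x_t = 0$ separately. In the first subcase I build a flow network with source $s$ and sink $s'$ representing the values $x' = 0$ and $x' = 1$; each edge in $E_H$ becomes an undirected capacity-$1$ edge, and each edge incident to $b$ or $t$ becomes a directed capacity-$1$ edge attached to $s$ or $s'$ whose orientation is determined by the two attributes ``$b$- vs.\ $t$-edge'' and ``endpoint in $V_1$ vs.\ $V_2$''. A minimum $s$-$s'$ cut then equals the minimum number of uncut edges over all partitions with $x_b \neq x_t$. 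The second subcase is treated by an analogous network in which the $t$-edges are attached in the same pattern as the $b$-edges, plus a constant contribution from any direct $(b, t)$ edges. The MIN-UNCUT optimum is the smaller of the two resulting cut values, and each cut is computable in polynomial time by any standard max-flow algorithm.

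The main obstacle, which warrants the most care, is verifying that the reduction is tight in both directions: for every candidate partition of $V$ I must check, case by case on edge type, that the number of uncut edges equals the capacity of the induced $s$-$s'$ cut; conversely, every minimum $s$-$s'$ cut must correspond to a well-defined assignment $x_v'$ — and hence to a concrete partition $(S_b, S_t)$ — of equal cost. Once this cost-preserving bijection is established, polynomiality is immediate. I expect no deeper difficulty: the almost-bipartite structure is exactly what allows every edge of $H$ to be modeled by an undirected capacity-$1$ arc without introducing spurious costs or orientations that would otherwise make MIN-UNCUT \NP{}-hard in general.
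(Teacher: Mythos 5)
Your reduction is correct and is, at its core, the same max-flow/min-cut construction as the paper's: your relabeling $x'_v = x_v$ on $V_1$ and $x'_v = 1 - x_v$ on $V_2$, followed by attaching each $b$- or $t$-edge to the source or sink according to which side of $H$ its endpoint lies on, produces exactly the network the paper obtains by splitting $b$ into $b_1,b_2$ and $t$ into $t_1,t_2$, with $b_1,t_2$ as sources adjacent to $H_1$ and $H_2$ respectively and $b_2,t_1$ as sinks. The differences lie in the correctness argument and in one case you cover that the paper does not. The paper proves the lower bound by reading the max flow as edge-disjoint odd cycles through $b$ or $t$ and even $b$--$t$ paths, each forcing an uncut edge, and the upper bound by observing that both sides of the minimum cut induce bipartite graphs; you instead set up a direct cost-preserving bijection between partitions and $s$-$s'$ cuts, which makes tightness immediate and is arguably cleaner, though you should indeed write out the four attachment cases explicitly as you propose. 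Finally, the paper simply asserts that $b$ and $t$ must lie in different partitions --- justified in the \MLCMP{} application, where they anchor ``bottom'' and ``top'', but not by the abstract \prob{MIN-UNCUT} statement, where placing $b$ and $t$ together can be strictly better (e.g., $b$ and $t$ both adjacent only to one vertex $v$). Your second flow computation for the subcase $x_t = x_b$ closes this small gap, so your argument is, if anything, the more complete proof of the theorem as literally stated.
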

\begin{proof}
Almost bipartite graphs are a subclass of \emph{weakly bipartite graphs}~\cite{barahona83}.
Weakly bipartite graphs have no easy combinatorial characterization.
It is known that \prob{MAX-CUT} and \prob{MIN-UNCUT} can be solved in
polynomial time on weakly bipartite
graphs using the ellipsoid method~\cite{grotschel81}. However, the algorithm
might be not fast in practice. As mentioned in~\cite{grotschel81} ``it remains a challenging problem
to find a practically efficient method for the max-cut problem in weakly bipartite
graphs which is of a combinatorial nature and does not suffer from the drawbacks of
the ellipsoid method''. In the following we present such an algorithm
for almost bipartite graphs.

The special vertices $b$ and $t$ have to belong to different partitions of
$G_{ab}$.
We create a new graph $G'$ from $G_{ab}$. We split vertex $b$ into $b_1, b_2$ and
vertex $t$ into $t_1,t_2$ such that $b_1$ and $t_2$ are connected to the vertices of the first partition
$H_1$ of $H$, and $b_2$ and $t_1$ are connected to the second partition $H_2$. Formally,
for each edge $(b, v)\in E, v\in H_1$, we create an edge $(b_1, v)$; for each edge $(b, v)\in E, v\in H_2$,
we create an edge $(v, b_2)$. Similarly, edges $(v, t_1)$ are created for all $(t, v)\in E, v\in H_1$,
and edges $(t_2, v)$ are created for all $(t, v)\in E, v\in H_2$.
The construction is illustrated in Fig.~\ref{fig:ab}.

Now, for each edge $(u, v)$ of $G'$ we assign capacity $1$, and compute a maximum flow between the pair of
sources $b_1, t_2$ to the pair of sinks $b_2, t_1$. This can be done in
polynomial time with a maximum flow algorithm by introducing a
supersource (connected to $b_1$ and $t_2$) and a supersink
(connected to $b_2$ and $t_1$). Indeed, there is an integral maximum flow in $G'$.

A maximum flow corresponds to a maximum set of edge-disjoint paths starting at $b_1$ or $t_2$ and ending at $b_2$ or $t_1$.
Such a path corresponds to one of the following structures in
the original graph $G$: (i) an odd cycle containing vertex $b$ (a
cycle with an odd number of edges);
(ii) an odd cycle containing vertex $t$; (iii) an even path between
$b$ and $t$.

Note that if a graph has an odd cycle then at least one
of the edges of the cycle belongs to the same partition in any solution of \prob{MIN-UNCUT}.
The same holds for an even path connecting $b$ and $t$ in $G$ since $b$ and $t$ have to belong to different
partitions. Since the maximum flow corresponds to the edge-disjoint odd cycles and even paths in
$G$, the value of the flow is a lower bound for a solution of \prob{MIN-UNCUT}.

Let us prove that the value of the maximum flow in $G'$ is also an upper bound. By Menger's theorem,
the value of the maximum flow in $G'$ is the cardinality of the minimum edge cut separating sources and sinks.
Let $E^*$ be the minimum edge cut and let $G_1$ and $G_2$ be the correspondent disconnected subgraphs
of $G'$; see Fig.~\ref{fig:ab}. Notice that $G_1$ is a bipartite graph since $H\cap G_1$ is bipartite; vertex $b_1$ is only
connected to vertices of $H_1$ and vertex $t_2$ is only connected to vertices of $H_2$. Therefore, there is a
2-partition of vertices of $G_1$ such that $b_1$ and $t_2$ belong to different partitions; let us denote
the partitions $V^1_b$ and $V^1_t$. Similarly, there is a 2-partition of $G_2$ into $V^2_b$ and $V^2_t$
with $b_2 \in V^2_b$ and $t_1 \in V^2_t$. We combine these partitions so that
$S_b = \{b\} \cup \left(V^1_b \cup V^2_b\right) \setminus~\{b_1,
b_2\}$ and $S_t = \{t\} \cup \left(V^1_t \cup V^2_t\right) \setminus \{t_1, t_2\}$.
Note that $S_b$ and $S_t$ are the required partitions
of vertices of $G$ for \prob{MIN-UNCUT}, and the set of uncut edges is
$E^*$, which completes the proof of the theorem.
\end{proof}

\begin{wrapfigure}[10]{r}{.28\textwidth}
	\vspace{-5ex}
  \centering
    \includegraphics[]{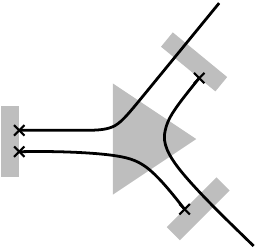}
  \caption{Example without consistent line directions.}
  \label{fig:claw}
\end{wrapfigure}
As a direct corollary, we get a polynomial-time algorithm for \PrMLCMP{} on caterpillars.
For which underlying networks can the algorithm be applied? Let $(G=(V, E), L)$ be an instance of
\PrMLCMP{}. We say that the lines $L$ have \df{consistent} directions on $G$ if the lines can
be directed so that for each edge $e \in E$ all lines $L_e$ have the same direction.
If the underlying graph is a path then we can consistently direct the lines from
left to right. Similarly, consistent line directions exist for ``left-to-right''~\cite{bekos08,argyriou09} and
``upward''~\cite{fink+pupyrev13} trees, that is, trees for which there is an embedding with all lines
being monotone in some direction. It is easy to test whether there are consistent
line directions by giving an arbitrary direction to some first line,
and then applying the same direction on all lines sharing edges with
the first line until all lines have directions or an inconsistency is found.
Hence, we get the following result.

\begin{theorem}
\label{thm:consist}
  \PrMLCMP{} can be solved in polynomial time for instances $(G, L)$ admitting
  consistent line directions.
\end{theorem}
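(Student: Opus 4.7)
The plan is to reduce \PrMLCMP{} on $(G,L)$ to \prob{MIN-UNCUT} on an almost bipartite graph, so that Theorem~\ref{thm:uncut} finishes the job. The first step is to detect and construct consistent line directions in polynomial time: pick any line, orient it arbitrarily, and for every other line that shares an edge with it propagate a matching orientation in BFS fashion. The procedure either detects an inconsistency (some line is forced to take both orientations) or produces a global orientation in which all lines passing through a common edge agree on that edge.

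With directions in hand, I would build the 2SAT model and the auxiliary graph $G_{ab}$ from Section~\ref{sec:mlcm-p-2sat} exactly as in the caterpillar case. Since $L$ is proper, only crossing formulas of types~\ref{formula-overlap} and~\ref{formula-just-one} can appear, so $G_{ab}$ has one vertex $l_u$ per terminal/port variable plus the two special vertices $b$ and $t$, one edge between the two variable-vertices for every type~\ref{formula-overlap} formula, and one edge from $b$ or $t$ to a variable-vertex for every type~\ref{formula-just-one} formula. As established earlier in the section, a $b$-$t$ cut of $G_{ab}$ with $k$ uncut edges corresponds to a line layout with exactly $k$ avoidable crossings.

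The heart of the argument is showing that $G_{ab}$ is almost bipartite. I would partition the non-special vertices into $V_s \cup V_t$ by placing $l_u$ in $V_s$ if $u$ is the \emph{source} (tail) terminal of $l$ under the consistent orientation and in $V_t$ if it is the \emph{sink} (head) terminal. Any type~\ref{formula-overlap} edge arises from two lines $l, l'$ sharing a common subpath from $u$ to $v$ such that $l$ terminates at $u$ and $l'$ terminates at $v$. By consistency the two lines traverse the shared subpath in the same direction, so exactly one of $l_u, l'_v$ is a source terminal and the other a sink terminal. Hence the subgraph of $G_{ab}$ induced by $V_s \cup V_t$ is bipartite with parts $V_s$ and $V_t$, and all remaining edges are incident to $b$ or $t$, matching the definition of almost bipartite exactly.

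Once this is established, Theorem~\ref{thm:uncut} gives a polynomial-time optimal \prob{MIN-UNCUT} solution on $G_{ab}$, which we translate back into a terminal-side assignment and then into a line layout of minimum crossing number. The main obstacle is the bipartition argument of the previous paragraph: without consistent directions it can fail, since at some vertex several lines may meet with incompatible orientations, forcing a type~\ref{formula-overlap} edge to stay inside one class; Fig.~\ref{fig:claw} exhibits exactly this obstruction. Consistency of directions is therefore the precise structural assumption that enables the reduction.
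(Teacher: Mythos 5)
Your proposal is correct and takes essentially the same route as the paper: reduce to \prob{MIN-UNCUT} on $G_{ab}$ and invoke Theorem~\ref{thm:uncut}, with almost-bipartiteness following from the consistent directions --- your explicit source/sink $2$-coloring of the variable vertices is the direct counterpart of the paper's argument that no odd cycle of variable vertices exists (its ``aligned''/``opposite'' ports are exactly your source/sink terminals). The one step you gloss over with ``exactly as in the caterpillar case'' is that on a general graph the top side of each terminal port must be \emph{chosen} using the line directions, so that for every overlapping pair the top sides lie on the same side of the common subpath and the crossing formula comes out in the non-negated form $(l_u \vee l'_v) \wedge (\neg l_u \vee \neg l'_v)$, i.e., every edge of $G_{ab}$ is genuinely a disagreement constraint; the paper performs this side assignment explicitly in the first paragraph of its proof, and without it some type~\ref{formula-overlap} formulas could be equality constraints that the $b$-$t$-cut model does not capture.
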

\begin{proof}
Given consistent line directions, we assign top/bottom sides of each port as follows.
Consider a port of $u \in V$ corresponding to $(u, v) \in E$. Let $\pi_{uv} = (l_1 \dots l_p \dots l_q \dots l_{|L_{uv}|})$,
where $u$ is a terminal for the lines $l_1, \dots, l_p, l_q, \dots, l_{|L_{uv}|}$ and
$u$ is an intermediate station for the lines $l_{p+1}, \dots, l_{q-1}$.
We assume that the lines $l_1, \dots, l_p$ terminate at the top side of the port if the lines $L_{uv}$ are
directed from $u$ to $v$; otherwise, $l_1, \dots, l_p$ terminate at the bottom side of the port.

Let us consider a pair of lines $l, l'$ having a common subpath $P$ starting at $u$ and ending at $v$.
It is easy to see that for the terminal ports of $u$ and $v$ their top sides are located on the same side of $P$.
Hence, in our 2SAT model, we have only crossing formulas of type
$(l_u \vee l_v') \wedge (\neg l_u \vee \neg l_v')$ (apart from clauses
consisting of a single variable). Therefore, the graph $G_{ab}$ contains an edge $(l_u, l_v')$
for the pair of lines.

Let us show that $G_{ab}$ is almost bipartite. To this end, we prove that there is no odd cycle
containing only vertices $l_u$ for $u\in V, l \in L$. Suppose there is such a cycle $C$. Let $l^1_u, l^2_v$
be the first two lines in the cycle; a common subpath $P$ of $l^1_u$ and $l^2_v$ starts at $u$ and ends at $v$.
We may assume without loss of generality that the lines are directed from $u$ to $v$.
Consider the port at $u$ corresponding to the first edge $(u, u_1)$ of $P$. Observe that the direction of the lines
is ``aligned'' with the port; that is, the lines are directed from $u$ to $u_1$.
Now consider the port at $v$ corresponding to the last edge $(v, v_1)$ of $P$.
The direction of the lines is ``opposite'' to the port; that is, the lines directed from $v_1$ to $v$.
It is easy to see that for the next line $l^3_w$ in the cycle $C$ the direction of the lines is again ``aligned''
with the corresponding port of $w$. Moreover, for every line $l^{2k+1}$ the corresponding port is ``aligned''
with the direction of lines, and for every $l^{2k}$ the direction of lines is ``opposite'' to the port.
Hence, there cannot exist an odd cycle $C$.
\end{proof}

Note that there are examples of trees without consistent line directions; see
Fig.~\ref{fig:claw}.

\section{Practical Questions on MLCM}
\label{sec:practice}

Not every real-world transportation network meet the requirements implied by
our model. For example, a graph introduced in Section~\ref{sec:p-mlcm-p} is
not necessarily almost bipartite or some lines may be subpaths of another lines.
At the same time, lines are not necessarily simple paths as many metro maps have circular
or tree-like lines. Thus, the existing algorithms cannot be applied. We propose two
directions to for future work.

\begin{enumerate}
\item In many metro networks, there are just few
lines violating the required properties. We suggest to first create an instance with
the desired properties by deleting few (parts of) lines. Then, after applying our
algorithm, the deleted parts can be reinserted with as few crossings as possible.
As we show in Lemma~\ref{lm:insert}, a line can be inserted into an existing order
with the smallest number of introduced crossings. A number of possible questions/extensions
is possible in the direction. For example, how to find the ``best'' set of edges to remove?
An algorithm for insertion several lines optimally is also needed.

\item Although both \MLCM{} and \MLCM{} variants are \NP{}-hard, there is a hope
to construct fixed-parameter tractable algorithms (in addition to the one presented in Section~\ref{sec:mlcm-p-2sat}),
which are able to produce exact solutions for real-world instances. As pointed out in~\cite{nollenburgThesis},
the maximal number of parallel metro lines
per edge is reasonably small in practice. Hence, one could try to design a
fixed-parameter tractable algorithm for the metro-line crossing minimization
problem with respect to the parameter. We develop such an algorithm for the case
when an underlying network is a caterpillar, see Theorem~\ref{thm:fptk}. Designing an
algorithm for general graphs is an interesting open problem.
\end{enumerate}

We also observe that so far, the focus has been on the number of crossings and not on their
visualization, although two line orders with the same crossing number may look quite
differently~\cite{fink+pupyrev13}. Therefore, an important practical problem is the visual representation
of computed line crossings. In our opinion, crossings of lines should preferably be
close to the end of their common subpath as this makes it easier to recognize that the
lines do cross. It is not always possible to find an optimal solution in which
every pair of lines cross at the end of their common subpath, see~\cite{pupyrev11}.
Is there a compromised solution with a small number of crossings and reasonable
distribution of crossings?
For making a metro line easy to follow the important
criterion is the number of its bends. Hence, an interesting question is how to
sort metro lines using the minimum total number of bends.

\subsection{Fixed-Parameter Tractability of \MLCM{}/\MLCMP{} on Paths}
Let $k$ be the maximum number of lines per edge.

\begin{theorem}
\label{thm:fptk}
An optimal order of lines for \MLCM{}/\MLCMP{} can be computed in time $O((k!)^2|V|)$ if the
underlying network is a caterpillar.
\end{theorem}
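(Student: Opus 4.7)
The plan is to do dynamic programming along the spine of the caterpillar. Let $v_1,\dots,v_n$ be the spine vertices and $e_i=(v_i,v_{i+1})$ the spine edges. Because each spine edge carries at most $k$ lines, the pair of port orders $(A_i,B_i) := (\pi_{v_iv_{i+1}},\pi_{v_{i+1}v_i})$ on $e_i$ ranges over a set of size $(k!)^2$; this is the DP state space per spine edge. Define $f(i;A,B)$ as the minimum number of counted crossings on the subgraph consisting of $e_1,\dots,e_i$, the spine vertices $v_1,\dots,v_{i+1}$, and all leaf edges attached to them, where $(A_i,B_i)=(A,B)$; ``valid'' means there are no avoidable vertex crossings at any of the processed vertices and, in the \MLCMP{} case, every port order satisfies the periphery condition.

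The transition incorporates $v_{i+1}$ (with its leaves and leaf edges) and $e_{i+1}$:
\[
f(i+1;A,B) \;=\; c_E(A,B) \;+\; \min_{A_i,B_i}\Bigl[f(i;A_i,B_i) + C\bigl(v_{i+1};B_i,A\bigr)\Bigr],
\]
where $c_E(A,B)$ counts edge crossings on $e_{i+1}$ and $C(v_{i+1};B_i,A)$ is the minimum contribution of $v_{i+1}$ and its leaves, minimized over leaf-port orderings (the leaf-side order on each leaf edge can be chosen as the reverse of its spine-side order so that leaf edges contribute no edge crossings). Since $c_E$ depends only on $(A,B)$ and $C$ only on $(B_i,A)$, the right-hand side factors: compute $\tilde f(B_i)=\min_{A_i} f(i;A_i,B_i)$, then $g(A)=\min_{B_i}[\tilde f(B_i)+C(v_{i+1};B_i,A)]$, and finally $f(i+1;A,B)=c_E(A,B)+g(A)$. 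Each stage takes $O((k!)^2)$ time per spine vertex once $C$ is available.

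The hard part will be tabulating $C(v_{i+1};\cdot,\cdot)$ within the budget when a spine vertex has many leaves, because each leaf port contributes another $k!$ choices of ordering and the joint leaf orderings are coupled through vertex crossings at $v_{i+1}$. I would handle this by preprocessing: subdivide the caterpillar so that each spine vertex carries at most one leaf, by replacing a hub with $d$ leaves by a length-$d$ segment of degree-three spine vertices each holding one leaf. The inserted intermediate vertices have no terminals and no lines enter/leave them, so no crossings are created and the problem is preserved; the new spine still has $O(|V|)$ vertices. Then $C(v_{i+1};B_i,A)$ is obtained by scanning the $k!$ orders of the unique leaf port, checking validity of the induced cyclic order at $v_{i+1}$ (no avoidable vertex crossing; periphery condition for \MLCMP{}), and scoring the resulting vertex crossings; averaged across the $(k!)^2$ argument pairs this fits in $O((k!)^2)$ per spine vertex.

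Summing $O((k!)^2)$ work per spine vertex over the $O(|V|)$ spine vertices, and reconstructing an optimal layout via standard backpointers, yields the claimed $O((k!)^2|V|)$ bound for both \MLCM{} and \MLCMP{}. The only difference between the two problem variants in the algorithm is which orderings are deemed valid at each port, so a single DP handles both.
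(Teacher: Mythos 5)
Your dynamic program has the same skeleton as the paper's: sweep along the spine, keep the permutation(s) of the lines at the current spine port as the DP state, and pay $O((k!)^2)$ per spine vertex by minimizing over pairs of valid permutations (your factorization of the transition recovers the paper's bound despite the larger pair-state). The real divergence is how you handle the leaves, and that is where the argument breaks. The paper never subdivides; its notion of a \emph{valid} permutation at $v_i$ is a condition on the full cyclic order at $v_i$, which implicitly absorbs the leaf ports into the per-vertex validity check and cost, so the state always ranges over orders of the at most $k$ lines on an \emph{original} edge.

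Your subdivision of a hub with $d$ leaves into a length-$d$ segment is not a safe reduction, for two reasons. First, it does not preserve the parameter: a line of the form leaf--hub--leaf at the same hub lies on no original spine edge, so one can have many pairwise edge-disjoint such lines with $k=1$; after stretching the hub into a segment they all travel along the new spine edges, and the middle edge can carry $\Theta(d)$ lines, so the state space is no longer $(k!)^2$ in the original $k$. Second, it does not preserve the objective: take leaves in cyclic order $u_1,u_2,u_3,u_4$ around the hub and lines $u_1$--$u_3$ and $u_2$--$u_4$. At the hub these form an \emph{unavoidable vertex crossing}, which the model does not count; after subdivision their common subpath is the new spine edge between the second and third segment vertices, the non-alternation constraints at its endpoints force the same relative order on both of its ports, and the crossing becomes a counted \emph{edge} crossing. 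So the claim that ``no crossings are created and the problem is preserved'' is false. (A smaller issue: tabulating $C(v;\cdot,\cdot)$ by enumerating all $k!$ leaf-port orders for each of the $(k!)^2$ argument pairs is $(k!)^3$ work; you would need to argue that the leaf-port order is essentially forced by the two spine orders.) The fix is to drop the subdivision and, as the paper does, define validity and the local crossing cost directly on the cyclic order at each hub including its leaf ports.
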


\begin{proof}
Let $P=v_1v_2\dots v_n$ be a path with $n=|V|$
vertices. Denote $cr_{v_iv_{i+1}}(\pi)$ to be the minimum number of line crossings
on the subpath $v_1\dots v_i$ so that lines $L_{v_iv_{i+1}}$ form a permutation $\pi$
on the right port of vertex $v_i$; see Fig.~\ref{fig:fpt-path}. Symmetrically, denote
$cr_{v_{i+1}v_i}(\pi)$ to be the minimum number of line crossings on the subpath
$v_1\dots v_{i+1}$ so that lines $L_{v_iv_{i+1}}$ form a permutation $\pi$
on the left port of vertex $v_{i+1}$. Also denote $cr_{v_i}(\pi)$ to be the minimum
number of line crossings on the subpath $v_1\dots v_i$ so that lines $L_{v_{i-1}v_i} \cap L_{v_iv_{i+1}}$
form a permutation $\pi$ on the left port of $v_i$. Since vertex crossings
are not allowed in our model, $cr_{v_i}(\pi)$ also corresponds to the optimal order on the right port of $v_i$.
If the problem being considered is \MLCM{} then we call an order of lines $\pi$ is \df{valid} on the left (right) port
of vertex $v_i$ if (i) the order does not induce vertex crossings at $v_i$. If we consider \MLCMP{} then
the order $\pi$ is \df{valid} if (i) holds and (ii) the line start/end at the bottommost or topmost position
of their terminals; see Fig.~\ref{fig:fpt-path2}. Notice that the algorithms for \MLCM{} and \MLCMP{} differ only in the
definition of a valid permutation.

\begin{figure}[t]
    \centering
    \subfigure[]{
    \includegraphics[height=3cm]{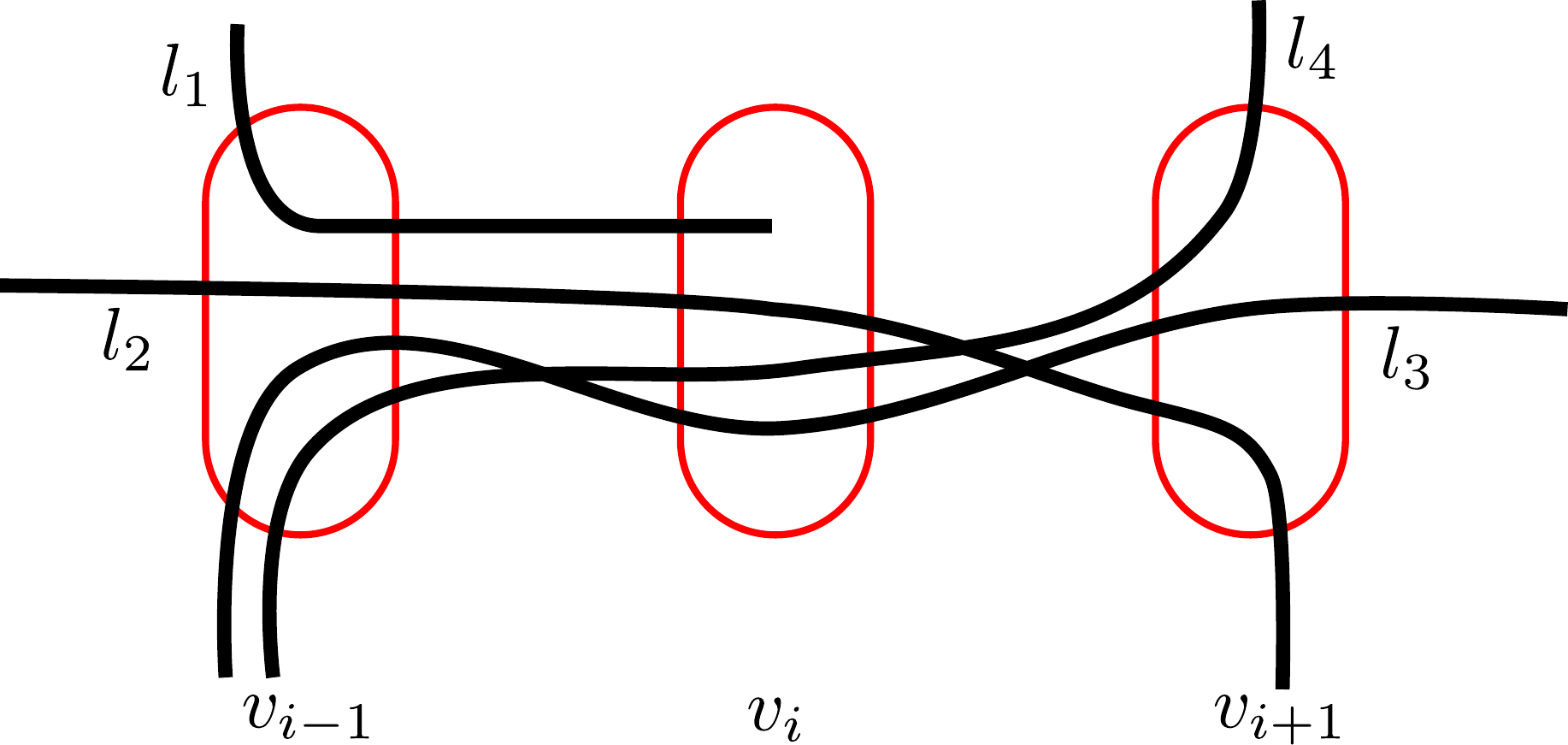}
    \label{fig:fpt-path}}
~~~~~~~~
    \subfigure[]{
    \includegraphics[height=3cm]{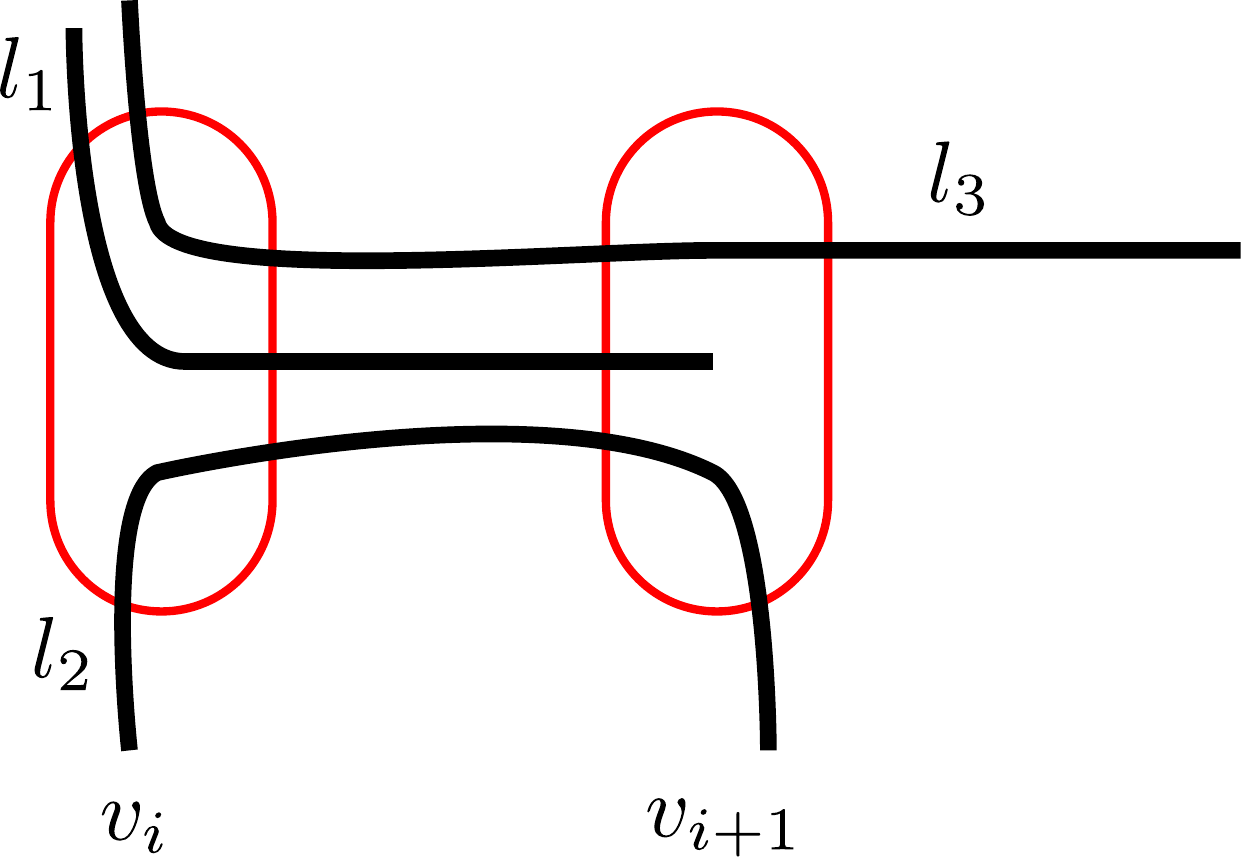}
    \label{fig:fpt-path2}}
    \caption{(a)~Solving \MLCM{}/\MLCMP{} on a path. The order of paths $L_{v_{i-1}v_i}$ on the
    right port of $v_{i-1}$ is $(l_1, l_2, l_3, l_4)$, and on the left port of $v_i$~-- $(l_1, l_2, l_4, l_3)$.
    (b)~The permutation $(l_3, l_1, l_2)$ is valid on the right port of $v_i$ for both \MLCM{} and \MLCMP{}.
    The permutation $(l_3, l_1, l_2)$ on the left port of $v_{i+1}$ is valid for \MLCM{} but not for \MLCMP{} since
    line $l_1$ has to terminate topmost or bottommost at $v_{i+1}$.}
\end{figure}

We compute values $cr(\pi)$ using dynamic programming by iterating over $P$ from $v_1$ to $v_n$.
Clearly, $cr_{v_1}(\pi) = 0$ for the only ``empty'' permutation $\pi=()$ since no lines
passes through vertex $v_1$. Let us describe how to compute intermediate values of $cr$.
\begin{itemize}
\item Let $(v_i, v_{i+1})$ be the current edge. We show how to compute $cr_{v_iv_{i+1}}$ from $cr_{v_i}$.

    For a valid permutation $\pi$ on the right port of $v_i$, we set
    $cr_{v_iv_{i+1}}(\pi) = cr_{v_i}(\sigma)$, where $\sigma$ is a subpermutation of $\pi$ comprised of the lines
    $L_{v_{i-1}v_i} \cap L_{v_iv_{i+1}}$. The step requires $O(k!)$ time.

\item Let $(v_i, v_{i+1})$ be the current edge. We show how to compute $cr_{v_{i+1}v_i}$ from $cr_{v_iv_{i+1}}$.

    It is easy to see that if lines form a permutation $\pi$ on the right port of $v_i$ and a permutation $\pi'$
    on the left port of $v_{i+1}$ then the number of crossings they make is exactly the number of
    inversions between $\pi$ and $\pi'$, that is, the number of pair $a, b$ with $\pi = (\dots a \dots b \dots)$ and
    $\pi' = (\dots b \dots a \dots)$. Let $inv(\pi, \pi')$ be a number of inversions between $\pi$ and $\pi'$. Then
    $cr_{v_{i+1}v_i}(\pi) = \min_{\pi'} (cr_{v_iv_{i+1}}(\pi') + inv(\pi, \pi'))$, where $\pi$ is a valid permutation
    of lines $L_{v_iv_{i+1}}$ on the left port of $v_{i+1}$; the minimum is taken over all valid permutations $\pi'$ of $L_{v_iv_{i+1}}$ on the right port of $v_{i}$. The step requires $O((k!)^2)$ time.

\item Let $(v_i, v_{i+1})$ be the current edge. We show how to compute $cr_{v_{i+1}}$ from $cr_{v_{i+1}v_i}$.

    For all permutations $\pi$ of lines $L_{v_{i-1}v_i} \cap L_{v_iv_{i+1}}$, let
    $cr_{v_{i+1}}(\pi) = \min_{\sigma} cr_{v_{i+1}v_i}(\sigma)$, where the minimum is taken over all valid permutations $\sigma$ on the left port of $v_{i+1}$ so that $\pi$ is a subpermutation of $\sigma$. The step can be done in
    $O(k!)$ time since the number of different permutations $\sigma$ is at most $k!$

\end{itemize}

It is easy to see that the minimum number of crossings for the \MLCM{} and \MLCMP{} problems is $cr_{v_n}(\pi)$
for the ``empty'' permutation $\pi=()$.
\end{proof}

\subsection{Optimal Insertion of a Line into an Existing Solution}
In this section, we explore a simple heuristic for computing line orders. The heuristic
works iteratively by inserting lines into an existing order. Let $l_1, \dots, l_{|L|}$ be
the input lines. The heuristic makes $|L|$ iteration, and on the iteration $i$ line $l_i$ is inserted
into the corresponding line orders. It turns out that every line can be inserted optimally;
that is, we can minimize $cr_i - cr_{i-1}$ for $1 < i \le |L|$, where $cr_i$ is the number of
crossings in a solution with lines $l_1, \dots, l_i$. Notice that the insertion algorithm works
for both \MLCM{} and \MLCMP{} models.

\begin{lemma}
\label{lm:insert}
Let $G = (V, E)$ be an embedded graph, $L$ be a set of lines on $G$, and
let $\pi_{uv}$ be a fixed order of the lines for all $u\in V$ and $(u, v) \in E$.
There is a polynomial-time algorithm for insertion a line $l$ into the existing
order so that the number of newly introduced crossings is minimized.
\end{lemma}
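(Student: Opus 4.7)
The plan is to formulate the insertion as dynamic programming along the path $l = (v_0, v_1, \dots, v_k)$. On each edge $e_i = (v_{i-1}, v_i)$ of $l$ we must choose a position for $l$ in both $\pi_{v_{i-1}v_i}$ and $\pi_{v_iv_{i-1}}$; each port admits at most $|L|+1$ positions (the gaps between consecutive lines already laid out). Because every other line is fixed and its relative order therefore frozen, every new crossing must involve $l$, and the new cost decomposes into (i) edge crossings on each $e_i$, determined solely by the two port-positions chosen for $l$ on that edge, and (ii) vertex interactions at each intermediate vertex $v_i$, determined solely by the two positions chosen at the two consecutive ports of $l$ incident to $v_i$.

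The DP state at $v_i$ I would take to be the chosen position $p_{i+1}$ of $l$ in $\pi_{v_iv_{i+1}}$. To transition to state $p_{i+2}$ at $v_{i+1}$, I enumerate the position $q_{i+1}$ of $l$ in $\pi_{v_{i+1}v_i}$ and charge (a) the number of existing lines in $L_{e_{i+1}}$ whose relative order with $l$ differs between the two ports of $e_{i+1}$, which is exactly the edge-crossing contribution on $e_{i+1}$, and (b) a penalty of $+\infty$ if the pair $(q_{i+1}, p_{i+2})$ would introduce an \emph{avoidable} vertex crossing between $l$ and some existing line at $v_{i+1}$, and $0$ otherwise; unavoidable vertex crossings are free by the problem definition. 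The boundary conditions at $v_0$ and $v_k$ are the only place where the two problem variants differ: for \MLCM{} every position is permissible, while for \MLCMP{} only the topmost or bottommost slot is, as required by the periphery condition; analogous restrictions apply at intermediate ports so that terminals of existing lines remain extremal.

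The critical local subroutine decides whether a choice $(q_{i+1}, p_{i+2})$ creates an avoidable vertex crossing at $v_{i+1}$. Since inserting $l$ does not alter the relative order of any two existing lines, only new crossings involving $l$ need be inspected. For each existing line $l''$ through $v_{i+1}$, I look at the at most four occurrences of $l$ and $l''$ in the cyclic order $\pi_{v_{i+1}}$ induced by $(q_{i+1}, p_{i+2})$ and check whether they interleave as $(\dots l \dots l'' \dots l \dots l'' \dots)$. Such an interleaving is avoidable precisely when some alternative position choice for $l$ at $v_{i+1}$ removes it (otherwise the crossing is forced by the two paths and is therefore unavoidable), and this is a purely local test that can be performed in $O(|L|)$ per configuration.

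I expect the main obstacle to be handling this avoidable/unavoidable classification cleanly across all shapes of meetings between $l$ and $l''$ at $v_{i+1}$, in particular the cases in which $l$ or $l''$ terminates at $v_{i+1}$ and thus contributes only a single occurrence to $\pi_{v_{i+1}}$, and correctly enforcing the periphery constraint in \MLCMP{} at terminal ports. Once this is in place, optimal substructure is immediate: the costs accrued on $e_1, \dots, e_i$ do not depend on any later choice, so minimizing the DP over all valid boundary states yields a line layout for $l$ with the fewest newly introduced crossings. With $O(|L|)$ states per vertex, $O(|L|^2)$ transitions per vertex, and $O(|L|)$ work per transition after precomputing edge-crossing counts, the total running time is $O(|L|^3 k)$, which is polynomial in the input size.
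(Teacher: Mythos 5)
Your proposal is correct and is essentially the paper's own construction in different clothing: the paper builds a graph $H$ whose vertices are the ``gaps'' at each port traversed by $l$, connects consecutive gaps by edges weighted with the number of newly introduced crossings (discarding transitions that violate the vertex-crossing or periphery constraints), and computes a shortest path from a gap at the first terminal port to one at the last; your dynamic program over gap positions with edge-crossing costs and $+\infty$ penalties is exactly this shortest-path computation on the layered gap graph.
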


\begin{proof}
Let $l = v_1v_2\dots v_k$ with $v_i \in V$. Create a graph $H=(U, W)$ in which a vertex is
a ``gap'' between lines in every port traversed by $l$, and an edge is a ``valid'' route for $l$.
Formally, for every $v_i, 1\le i < k$ create $|\pi_{v_iv_{i+1}}|+1$ vertices referred to as $V_i^r \subset U$;
similarly, for every $v_i, 1 < i \le k$ create $|\pi_{v_iv_{i-1}}|+1$ vertices referred to as $V_i^l \subset U$
(here, $|\pi_{v_iv_{i+1}}|$ is the number of lines in the sequence $\pi_{v_iv_{i+1}}$).
An edge $(u_1, u_2)$ is added to $W$ if (i) $u_1 \in V_i^r, u_2 \in V_{i+1}^l$ or $u_1 \in V_i^l, u_2 \in V_i^r$,
and (ii) the line $l$ passing through $u_1$ and $u_2$ does not violate requirements of the model.
For the condition (ii) we check whether $l$ introduces vertex crossings (for both variants \MLCM{} and \MLCMP{})
and the periphery condition is satisfied (only for \MLCMP{}).
Then assign weights for the edges of $H$ as the number of newly added crossings with the line $l$; see
Fig.~\ref{fig:insertion}.

\begin{figure}[ht]
  \begin{center}
    \includegraphics[height=3cm]{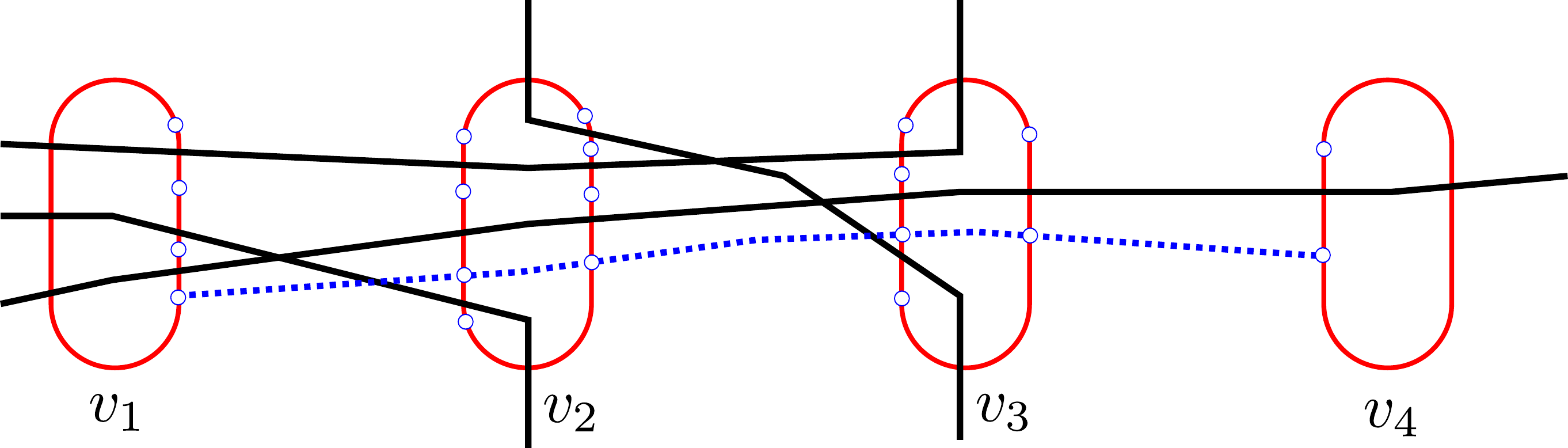}
  \end{center}
  \caption{Insertion of a line $l=v_1v_2v_3v_4$ (blue) into an existing order in the \MLCMP{} model.
  Vertices of the graph $H$ are shown as blue circles.}
  \label{fig:insertion}
\end{figure}

It is easy to see that an insertion of the line $l$ into the existing line orders corresponds to a path on the graph $H$.
Hence, in order to optimally insert $l$ we find a shortest path on $H$ connecting a vertex from $V_1^r$
(terminal port of $l$ at $v_1$) to a vertex from $V_k^l$ (terminal port of $l$ at $v_k$). Again, a source and
a destination for the shortest path should be chosen so that the condition (ii) is satisfied.
The vertices of the shortest path correspond to the desired positions of $l$ in the existing line orders.
\end{proof}

\section{Conclusion and Open Problems}
We proved that \MLCM{} is \NP{}-hard and presented an $O(\sqrt{\log
|L|})$-approximation algorithm for \MLCMP{}, as well as an exact
$O(|L|^3)$-time algorithm for \PrMLCMP{} on instances with consistent
line directions. We also suggested polynomial-time
algorithms for crossing-free solutions for \MLCM{} and \MLCMP{}.
From a theoretical point of view, there are still many interesting
open problems:
\begin{compactenum}
\item Can we derive
an approximation algorithm for \MLCM{}?

\item Is there a constant-factor approximation algorithm for \MLCMP{}?

\item What is the complexity status of \PrMLCM{}/\PrMLCMP{} in general?
\end{compactenum}

On the practical side, the visualization of the computed line
crossings is a possible future
direction. So far, the focus has been on the number of
crossings, although two line orders with the same crossing number may look quite
differently \cite{fink+pupyrev13}.
The question on how to visualize the crossings is especially important for curvy metro maps~\cite{fhnrsw-dmmbc-GD12}.
For example, a metro line is easy to follow if it has few bends. Hence, an
interesting question is how to visualize metro lines using the minimum total
number of bends.

\paragraph{Acknowledgments.}
We thank Martin N{\"o}llenburg, Jan-Henrik Haunert, Joachim Spoer\-hase, Lukas Barth, Stephen Kobourov,
and Sankar Veeramoni for discussions about variants of the metro-line crossing minimization problem.
We are especially grateful to Alexander Wolff for help with the paper.

\bibliographystyle{abbrv}
\bibliography{mlcm}

\comm{
\section{Recognition of Quasi-Crossing-Free Instances for \MLCM{}}
\label{sec:mlcm-quasi-planar-appendix}

An instance of \MLCM{} may contain some pairs of lines that
necessarily have to cross. Although no crossing-free solution exists,
there might still be a line layout in which only these pairs cross.
We present an algorithm to compute such a quasi-crossing-free solution.

Consider a pair of lines $l_1,l_2$ with a common subpath $P=v\dots u$; see Fig.~\ref{fig:unavoidable}.
Notice that if $l_1$ or $l_2$ terminates at $v$ or $u$ then
a crossing between $l_1, l_2$ is avoidable and, hence, must be avoided in a quasi-crossing-free
solution. To model this, we create a directed \textbf{relation graph} $G_{uv}$ for a port at $u\in V$
corresponding to an edge $(u, v)\in E$. Vertices of the graph
are the lines $L_{uv}$ passing through the edge $(u, v)$. Edges model
the relative order between the lines in $\pi_{uv}$; we
have an edge $(l_1, l_2)$ (similarly, $(l_2, l_1)$) in $G_{uv}$ if $l_1$ is above (below) $l_2$ at the port $uv$.

It is easy to see that if there is a cycle in any of the constructed graphs then
there is no quasi-crossing-free solution. On the other hand, absence of cycles does not
imply existence of a quasi-crossing-free solution; see Fig.~\ref{fig:mlcm-no-cycle-length-4}.
In order to find such a solution, we introduce two operations modifying relation
graphs. Intuitively, the operations are similar to finding a separator for a pair of lines
and adding an implicit relation between lines as in the previous section.

\begin{enumerate}[label=(\alph*)]
 \item Let $G_{uv}$ be a relation graph with edges $(l_1, l_2)$ and $(l_2, l_3)$. Add
 $(l_1, l_3)$ to $G_{uv}$.

 \item Let $l_1$ and $l_2$ be a pair of lines with a common subpath $P=v_1v_2\dots v_k$ forming an
 avoidable crossing. If $G_{vv_{i+1}}$ for some $1 \le i < k$ contains the edge $(l_1,l_2)$ then
 (i)~add $(l_1, l_2)$ to $G_{v_jv_{j+1}}$, and
 (ii)~add $(l_2, l_1)$ to $G_{v_{j+1}v_j}$ for all $1 \le j < k$.
\end{enumerate}

Our claim is as follows.

\begin{lemma}
  \label{lm:quasi-c-f-mlcm-appendix}
  There is a quasi-crossing-free solution for \MLCM{}
  if and only if all relation graphs are acyclic after
  applying operations (a) and (b) as long as possible.
\end{lemma}

\begin{proof}
To prove the ``only if'' direction, suppose a relation graph contains a cycle, but there
exists a quasi-crossing-free solution. The operation $(a)$ cannot introduce
a cycle in the graph as it adds transitive edges only. On the other hand,
the operation $(b)$ ``transfers'' edges for pairs of lines forming avoidable crossings.
Notice that in a quasi-crossing-free solution non-crossing pairs of lines have
the same relative order on all edges. Hence, the operation $(b)$ is also ``safe'', that is,
it cannot introduce a cycle for an instance having a quasi-crossing-free solution.

Let us prove the ``if'' direction.
\SP{TODO: add an arbitrary (?) edge; prove that it dosn't create cycles; finally,
avoidable crossings has 0 crossings; unavoidable may have arbitrary}
\end{proof}

Our algorithm for computing a quasi-crossing-free solution consists of
3 steps. First, we create relation graphs for every port. Second, we apply
operations (a) and (b) on the graphs as long as possible. Third,
we choose a pair of lines $l_1, l_2$ in a relation graph $G_{uv}$ so that
there are no edges $(l_1, l_2)$ and $(l_2, l_1)$ in the graph. Add the edge $(l_1, l_2)$
to $G_{uv}$, and again apply operations (a) and (b). By Lemma~\ref{lm:quasi-c-f-mlcm-appendix},
the operations are ``safe'', that is, if there exists a quasi-crossing-free solution then
there is a solution corresponding to the modified relation graphs.
If some relation graph gets cyclic, we immediately report that there is no quasi-crossing-free solution.
Finally, we get complete relation graphs $G_{uv}$ for all the ports, and hence, the orders $\pi_{uv}$ by
topologically sorting the lines $L_{uv}$.
Clearly, all the steps can be accomplished in polynomial-time, which proves Theorem~\ref{thm:mlcm-planar}.

}

\end{document}